\newtheorem{assumption}{\textbf{Assumption}}
\newtheorem{lemma}{\textbf{Lemma}}
\newtheorem{definition}{\textbf{Definition}}
\newtheorem{theorem}{\textbf{Theorem}}
\newtheorem{remark}{\textbf{Remark}}
\newtheorem{problem}{\textbf{Problem}}
\newcommand{\T}{^{\mbox{\tiny T}}}
\newcommand{\R}{\mathbb{R}}
\newcommand{\C}{\mathbb{C}}
\newcommand{\N}{\mathbb{N}}
\newcommand{\eps}{\varepsilon}
\let\leq\leqslant
\let\geq\geqslant
\newenvironment{proof}[1][Proof]%
{\par\noindent\textit{#1:\ }}%
{\hspace*{\fill} \rule{6pt}{6pt}}
\newenvironment{proof*}[1][Proof]%
{\par\noindent\textit{#1:\ }}{}
\DeclareMathOperator{\diag}{diag}
\DeclareMathOperator{\re}{Re}
\newenvironment{system}[1]%
{\setlength{\arraycolsep}{0.5mm}\left\{ \; \begin{array}{#1}}%
    {\end{array} \right.}
\newenvironment{system*}[1]%
{\setlength{\arraycolsep}{0.5mm} \begin{array}{#1}}%
  {\end{array}}
\begin{document}

\title{Regulated State Synchronization for Discrete-Time Homogeneous Networks of Non-Introspective Agents in Presence of Unknown Non-Uniform Input Delays: A Scale-Free Protocol Design}
\author{Zhenwei Liu\aref{neu}, Donya Nojavanzadeh\aref{wsu}, Dmitri Saberi\aref{stu}, Ali Saberi\aref{wsu},
	Anton A. Stoorvogel\aref{ut}}

\affiliation[neu]{College of Information Science and
	Engineering, Northeastern University, Shenyang 110819, China
	\email{liuzhenwei@ise.neu.edu.cn}}
\affiliation[wsu]{School of Electrical Engineering and Computer
	Science, Washington State University, Pullman, WA 99164, USA
	\email{donya.nojavanzadeh@wsu.edu; saberi@eecs.wsu.edu}}
\affiliation[stu]{Stanford University, Stanford, CA 94305, USA\email{dsaberi@stanford.edu}}
\affiliation[ut]{Department of Electrical Engineering,
	Mathematics and Computer Science, University of Twente, Enschede, The Netherlands
	\email{A.A.Stoorvogel@utwente.nl}}

\maketitle

\begin{abstract}
   This paper studies regulated state synchronization of discrete-time homogeneous networks of non-introspective agents in presence of unknown non-uniform input delays. A scale free protocol is designed based on additional information exchange, which does not need any knowledge of the directed network topology and the spectrum of the associated Laplacian matrix. The proposed protocol is scalable and achieves state synchronization for any arbitrary number of agents. Meanwhile, an upper bound for the input delay tolerance is obtained, which explicitly depends on the agent dynamics.
  \end{abstract}

	\keywords{Multi-agent systems, Regulated state synchronization, Unknown non-uniform input delays, Discrete-time }
	
	\footnotetext{This work is supported by Nature Science
		Foundation of Liaoning Province under Grant 2019-MS-116.}
\section{Introduction}
Cooperative control of multi-agent systems (MAS) is an
active research topic because of its widespread application in different areas such as sensor networks, automotive vehicle
control, satellite or robot formation, power distribution
systems and so on. See for instance the books \cite{ren-book} and \cite{wu-book} or the survey paper \cite{saber-murray3}.

We identify two classes of MAS: homogeneous and heterogeneous. 
State synchronization inherently requires homogeneous networks (i.e. networks with identical agent models). Therefore, in this paper, our focus is on homogeneous networks of MASs. State synchronization based on diffusive full-state coupling has been considered in the literature where the agent dynamics progress from single- and double-integrator
(e.g.  \cite{saber-murray2}, \cite{ren}, \cite{ren-beard}) to more
general dynamics (e.g. \cite{scardovi-sepulchre}, \cite{tuna1},
\cite{wieland-kim-allgower}). State synchronization based on
diffusive partial-state coupling has also been considered, including static design (\cite{liu-zhang-saberi-stoorvogel-auto} and \cite{liu-zhang-saberi-stoorvogel-ejc}), dynamic design (\cite{kim-shim-back-seo}, \cite{seo-back-kim-shim-iet}, \cite{seo-shim-back}, \cite{su-huang-tac},
\cite{tuna3}), and the design with additional communication (\cite{chowdhury-khalil} and \cite{scardovi-sepulchre}). 

A common assumption, especially for heterogeneous MAS, is that agents are introspective; that is, agents possess some knowledge about their own states. So far there exist many results about this type of agents, see for instance \cite{chen-auto2019, kim-shim-seo, li-soh-xie-lewis-TAC2019,  qian-liu-feng-TAC2019,yang-saberi-stoorvogel-grip-journal}.
On the other hand, for non-introspective agents, regulated output synchronization for a heterogeneous network is studied in
\cite{peymani-grip-saberi,peymani-grip-saberi-fossen}.  Other designs can also be found, such as an
internal model principle based design
\cite{wieland-sepulchre-allgower}, distributed high-gain observer
based design \cite{grip-yang-saberi-stoorvogel-automatica}, 
low-and-high gain based, purely distributed, linear time invariant protocol design \cite{grip-saberi-stoorvogel}.

In practical applications, the network dynamics are not perfect and may be subject to delays. Time delays may afflict systems performance or even lead to instability. As discussed in \cite{cao-yu-ren-chen}, two kinds of delay have been considered in the literature: input delay and communication delay. Input delay is the processing time to execute an input for each agent whereas communication delay can be considered as the time for transmitting information from origin agent to its destination.
Some researches have been done in the case of communication delay \cite{chopra-tac,chopra-spong-cdc06,ghabcheloo,klotz-obuz-kan,lin-jia-auto, munz-papachristodoulou-allgower,munz-papachristodoulou-allgower2,stoorvogel-saberi-cdc16,tian-liu,xiao-wang-tac}.  In the case of input delays, many efforts have been done (see \cite{bliman,lin-jia,saber-murray2, tian-liu, xiao-wang}) where they are mostly restricted to simple agent models such as first and second-order dynamics for both linear and nonlinear agent dynamics. \cite{wang-aut,wang-saberi-stoorvogel-grip-yang} studied state synchronization problems in the presence of unknown uniform constant input delay for both continuous- and discrete-time networks with higher-order linear agents. Recently, \cite{zhang-saberi-stoorvogel-continues-discrete} has studied synchronization in homogeneous networks of both continuous- or discrete-time agents with unknown non-uniform constant input delays.

A common characteristic in all of the aforementioned works either with input delay or communication delay is that the proposed protocols require some knowledge of communication networks that is the spectrum of associated Laplacian matrix and obviously the number of agents. In contrast, by virtue of an additional information exchange for MAS with both full- and partial-state coupling, we design and present protocols with the following distinctive characteristics: 
\begin{itemize}
	\item  The design is independent of information about communication networks. That is to say, the dynamical protocol can work for any communication network such that all of its nodes have path to the exosystem. 
	\item  The dynamic protocols are designed for networks with unknown non-uniform input delays where the admissible upper bound on delays only depends on agent model and does not depend on communication network and the number of agents.
	\item The proposed protocols are scale-free: they achieve regulated state synchronization for any MAS with any number of agents, any admissible non-uniform input delays,
	and any communication network.
\end{itemize}

\subsection*{Notations and definitions}

Given a matrix $A\in \mathbb{R}^{m\times n}$, $A\T$ denotes its
conjugate transpose and $\|A\|$ is the induced 2-norm while $\sigma_{\min}(A)$ denotes the smallest singular value of A. Let $j$ indicate $\sqrt{-1}$. A square matrix
$A$ is said to be Schur stable if all its eigenvalues are in the closed unit disc. We denote by
$\diag\{A_1,\ldots, A_N \}$, a block-diagonal matrix with
$A_1,\ldots,A_N$ as the diagonal elements. $A\otimes B$ depicts the
Kronecker product between $A$ and $B$. $I_n$ denotes the
$n$-dimensional identity matrix and $0_n$ denotes $n\times n$ zero
matrix; sometimes we drop the subscript if the dimension is clear from
the context. Moreover, $\ell_\infty^n(K)$ denote the Banach space of finite sequences $\{y_1,\hdots,y_K\}\subset \mathbb{C}^n$ with norm $\|.\|_\infty=\max_{i}\{\|y_i\|\}$.

% Moreover, let $\mathfrak{C}_\kappa^n=\mathit{C}(\overline{[-\kappa,0]},\mathbb{R}^n)$ denote the Banach space of all continues functions from $\overline{[-\kappa,0]}\to \mathbb{R}^n$ with norm
%\[
%\|x\|_{\mathit{C}}=\sup_{k\in\overline{[-\kappa,0]}}\|x(k)\|.
%\]
%where, the notation $\overline{[k_1, k_2]}$ means
%\[\overline{[k_1, k_2]} = \{ k \in \mathbb{Z} : k_1 \leq k \leq k_2\}.\]
To describe the information flow among the agents we associate a \emph{weighted graph} $\mathcal{G}$ to the communication network. The weighted graph $\mathcal{G}$ is defined by a triple
$(\mathcal{V}, \mathcal{E}, \mathcal{A})$ where
$\mathcal{V}=\{1,\ldots, N\}$ is a node set, $\mathcal{E}$ is a set of
pairs of nodes indicating connections among nodes, and
$\mathcal{A}=[a_{ij}]\in \mathbb{R}^{N\times N}$ is the weighted adjacency matrix with non negative elements $a_{ij}$. Each pair in $\mathcal{E}$ is called an \emph{edge}, where
$a_{ij}>0$ denotes an edge $(j,i)\in \mathcal{E}$ from node $j$ to
node $i$ with weight $a_{ij}$. Moreover, $a_{ij}=0$ if there is no
edge from node $j$ to node $i$. We assume there are no self-loops,
i.e.\ we have $a_{ii}=0$. A \emph{path} from node $i_1$ to $i_k$ is a
sequence of nodes $\{i_1,\ldots, i_k\}$ such that
$(i_j, i_{j+1})\in \mathcal{E}$ for $j=1,\ldots, k-1$. A \emph{directed tree} is a subgraph (subset
of nodes and edges) in which every node has exactly one parent node except for one node, called the \emph{root}, which has no parent node. The \emph{root set} is the set of root nodes. A \emph{directed spanning tree} is a subgraph which is
a directed tree containing all the nodes of the original graph. If a directed spanning tree exists, the root has a directed path to every other node in the tree.  

For a weighted graph $\mathcal{G}$, the matrix
$L=[\ell_{ij}]$ with
\[
\ell_{ij}=
\begin{system}{cl}
\sum_{k=1}^{N} a_{ik}, & i=j,\\
-a_{ij}, & i\neq j,
\end{system}
\]
is called the \emph{Laplacian matrix} associated with the graph
$\mathcal{G}$. The Laplacian matrix $L$ has all its eigenvalues in the
closed right half plane and at least one eigenvalue at zero associated
with right eigenvector $\textbf{1}$ \cite{royle-godsil}. Moreover, if the graph contains a directed spanning tree, the Laplacian matrix $L$ has a single eigenvalue at the origin and all other eigenvalues are located in the open right-half complex plane \cite{ren-book}.

\section{Problem Formulation}

Consider a MAS consisting of $N$ identical discrete-time linear dynamic agents with input delay:
\begin{equation}\label{eq1}
\begin{cases}
{x}_i(k+1)=Ax_i(k)+Bu_i(k-\kappa_i),\\
y_i(k)=Cx_i(k),\\
x_i(\psi)=\phi_i(\psi+\bar{\kappa}), \quad \psi\in\overline{[-\bar{\kappa},0]}
\end{cases}
\end{equation}
where $x_i(k)\in\mathbb{R}^n$, $y_i(k)\in\mathbb{R}^q$ and
$u_i(k)\in\mathbb{R}^m$ are the state, output, and the input of agent $i=1,\ldots, N$, respectively.

Moreover, $\kappa_i$ represent the input delays with $\kappa_i\in\overline{[0,\bar{\kappa}]}$, where $\bar{\kappa} = \max_{i} \{\kappa_i\}$, $\phi_i \in \ell_\infty^n(\bar{\kappa} )$ and the notation $\overline{[k_1, k_2]}$ means
\[\overline{[k_1, k_2]} = \{ k \in \mathbb{Z} : k_1 \leq k \leq k_2\}.\]

\begin{assumption}\label{Aass}
	We assume that:
	\begin{itemize}		
		\item[(i)] $(A,B)$ are stabilizable and $(C,A)$ are detectable.
		\item[(ii)] All eigenvalues of $A$ are in the closed unit disc. 
	\end{itemize}
\end{assumption}

In this paper, we consider regulated
state synchronization. The reference trajectory is generated by the following exosystem
\begin{equation}\label{solu-cond}
\begin{system*}{rl}
x_r(k+1) & = A x_r(k)\\
y_r(k)&=Cx_r(k).
\end{system*}	
\end{equation}
with $x_r(k)\in\R^n$.  Our objective is that the agents achieve regulated
state synchronization, that is
\begin{equation}\label{synchro}
\lim_{k\to \infty} (x_i(k)-x_r(k))=0,
\end{equation}
for all $i\in\{1,\ldots,N\}$. Clearly, we need some level of
communication between the exosystem and the agents.  We assume that a nonempty
subset $\mathscr{C}$ of the agents have access to their
own output relative to the output of the exosystem.  Specially, each
agent $i$ has access to the quantity
\begin{equation}\label{elp}
\psi_i=\iota_{i}(y_i(k)-y_r(k)),\qquad
\iota_i=
\begin{cases}
1, & i\in \mathscr{C},\\
0, & i\notin \mathscr{C}.
\end{cases}
\end{equation}
The network provides agent $i$ with the following information,
\begin{equation}\label{zeta_l-n}
\bar{\zeta}_i(k) = \sum_{j=1}^{N}a_{ij}(y_i-y_j)+\iota_{i}(y_i(k)-y_r(k)).
\end{equation}
where $a_{ij}\geq 0$ and $a_{ii}=0$. This communication topology of the network can be described by a weighted graph $\mathcal{G}$ with the $a_{ij}$ being the coefficients of the weighting matrix
$\mathcal{A}$ (not of the dynamics matrix $A$ introduced in\eqref{eq1}). 

We refer to \eqref{zeta_l-n} as \emph{partial-state coupling} since only part of
the states are communicated over the network. When $C=I$, it means all states are communicated over the network, we call it \emph{full-state coupling}. Then, the original agents are expressed as
\begin{equation}\label{neq1}
x_i(k+1)=Ax_i(k)+Bu_i(k-\kappa_i)
\end{equation}
meanwhile, \eqref{zeta_l-n} will change as
\begin{equation}\label{zeta_l-nn}
\bar{\zeta}_i(k) = \sum_{j=1}^{N}a_{ij}(x_i(k)-x_j(k))+\iota_{i}(x_i(k)-x_r(k)).
\end{equation}

To guarantee that each agent can achieve the required regulation, we need to make sure that there exists a path to each node starting with node from the set $\mathscr{C}$. Motivated by this requirement, we define the following set of graphs.
\begin{definition}\label{graph-def}
	Given a node set $\mathscr{C}$, we denote by $\mathbb{G_\mathscr{C}^N}$ the set of all graphs with $N$ nodes containing the node set $\mathscr{C}$, such that every node of the network graph $\mathscr{G}\in \mathbb{G_\mathscr{C}^N}$ is a member of a directed tree which has its root contained in the node set $\mathscr{C}$.
\end{definition}
\begin{remark}
	Note that Definition \ref{graph-def} does not require necessarily the existence of directed spanning tree.
\end{remark}
From now on, we will refer to the node set $\mathscr{C}$ as the \emph{root set} in view of Definition \ref{graph-def}. For any graph $\mathcal{G}\in \mathbb{G}^N_{\mathscr{C}}$, with the
associated Laplacian matrix $L$, we define the expanded Laplacian
matrix as
\[
\bar{L}=L+\diag\{\iota_i\} = [\bar{\ell}_{ij}]_{N\times N}.
\]
and we define
\begin{equation}\label{hodt-LDa}
\bar{D}=I-(2I+D_{\text{in}})^{-1}\bar{L}.
\end{equation}
where
\[
D_{\text{in}}=\diag\{d_{in}(i) \}
\]
with $d_{in}(i)=\sum_{j=1}^N a_{ij}$.  It is easily verified that the
matrix $\bar{D}$ is a matrix with all elements nonnegative and the sum of each row is less than or equal to $1$. Note that based on \cite[Lemma 1]{liu2018regulated}, matrix $\bar{D}$ has all eigenvalues in the open unit disc if and only if $\mathscr{G}\in \mathbb{G_\mathscr{C}^N}$. 

We can obtain the new information exchange
\begin{equation}\label{zeta_l-nd}
\bar{\zeta}_i^d(k) = \frac{1}{2+d_{in}(i)}\sum_{j=1}^{N}a_{ij}(y_i(k)-y_j(k))+\iota_{i}(y_i(k)-y_r(k)).
\end{equation}
and
\begin{equation}\label{zeta_l-nnd}
\bar{\zeta}_i^d(k) = \frac{1}{2+d_{in}(i)}\sum_{j=1}^{N}a_{ij}(x_i(k)-x_j(k))+\iota_{i}(x_i(k)-x_r(k))
\end{equation}

In this paper, we introduce an additional
information exchange among protocols. In particular, each agent 
$i=1,\ldots, N$ has access to additional information, denoted by
$\hat{\zeta}_i(k)$, of the form
\begin{equation}\label{eqa1}
\hat{\zeta}_i(k)=\frac{1}{2+d_{in}(i)}\sum_{j=1}^N\bar{\ell}_{ij}\xi_j(k)
\end{equation}
where $\xi_j(k)\in\mathbb{R}^n$ is a variable produced internally by agent $j$ and to be defined in next sections.

We formulate the problem for regulated state
synchronization of a MAS with full- and partial-state coupling as following.

\begin{problem}\label{prob3}
	Consider a MAS described by \eqref{eq1} satisfying Assumption \ref{Aass}, with a given $\bar{\kappa}$ and the
	associated exosystem \eqref{solu-cond}. 
	Let a set of nodes
	$\mathscr{C}$ be given which defines the set
	$\mathbb{G}_{\mathscr{C}}^N$ and 	
	let the asssociated network
	communication graph $\mathcal{G} \in \mathbb{G}_{\mathscr{C}}^N$ be
	given by \eqref{zeta_l-n}.
	
	The \textbf{scalable regulated state synchronization problem with additional information exchange} of a discrete-time MAS is to find,
	if possible, a linear dynamic protocol for each agent $i \in \{1,\hdots,N\}$, using only knowledge of agent model, i.e., $(A,B,C)$, and upper bound of delays $\bar{\kappa}$, of the form:
	\begin{equation}\label{protoco4}
	\begin{system}{cl}
	x_{c,i}(k+1)&=A_{c,i}x_{c,i}(k)+B_{c,i}u_i(k-\kappa_i)\\&\qquad+C_{c,i}\bar{\zeta}_i^d(k)+D_{c,i}\hat{\zeta}_i(k),\\
	u_i(k)&=F_{c,i}x_{c,i}(k),
	\end{system}
	\end{equation}
	where $\hat{\zeta}_i(k)$ is defined in \eqref{eqa1} with $\xi_i(k)=H_{c}x_{i,c}(k)$, and $x_{c,i}(k)\in\R^{n_i}$, such that regulated
	state synchronization \eqref{synchro} is achieved for any $N$ and any
	graph $\mathcal{G}\in \mathbb{G}_{\mathscr{C}}^N$.
\end{problem}

%\begin{problem}\label{prob4}
%	Consider a MAS described by \eqref{eq1} and \eqref{eq2} and the
%	associated exosystem \eqref{solu-cond}. Let a set of nodes
%	$\mathcal{C}$ be given which defines the set
%	$\mathbb{G}_{\mathcal{C}}$. Let the asssociated network
%	communication graph $\mathcal{G} \in \mathbb{G}_{\mathcal{C}}$ be
%	given by \eqref{zeta_l-n}.
%	
%	The \textbf{global regulated state synchronization problem with additional information exchange} of a MAS is to find a nonlinear dynamic protocol of the
%	form
%	\begin{equation}\label{protoco5}
%	\begin{system}{cl}
%	\dot{x}_{c,i}&=\bar{f}(x_{c,i},\bar{\zeta}_i(t),\hat{\zeta}_i(t)),\\
%	u_i(t)&=\bar{g}(x_{c,i}),
%	\end{system}
%	\end{equation}
%	where $x_{c,i}\in\R^{n_i}$, such that for any $N$ and any
%	graph $\mathcal{G}\in \mathbb{G}_{\mathcal{C}}$, regulated
%	state synchronization \eqref{synchro} is achieved for all
%	agents.
%\end{problem}

\section{Protocol Design}

In this section, we will consider the regulated state synchronization problem for a MAS with input delays. In particular, we cover separately systems with full-state coupling and those with partial-state coupling.

\subsection{Full-state coupling}

Firstly, we define
\[
\omega_{\max}=
\begin{cases}
0, \qquad\text{A is Schur stable},& \\
\max\{\omega\in[0,\pi] | \det(e^{j\omega} I-A)=0\}, &\text{otherwise}.
\end{cases}
\]

Then, we design a dynamic protocol with additional information exchanges for agent
$i\in\{1,\ldots,N\}$ as follows.
\begin{equation}\label{pscp1}
\begin{system}{cll}
\chi_i(k+1) &=& A\chi_i(k)+Bu_i(k-\kappa_i)+A\bar{\zeta}_i^d(k)-A\hat{\zeta}_i(k) \\
u_i(k) &=& - \rho K_{\eps} \chi_i(k),
\end{system}
\end{equation}
where $\rho>0$ and
\[
K_{\eps}=(I+B\T P_\eps B)^{-1}B\T P_\eps A
\]
and $\eps$ is a parameter satisfying
$\eps\in (0,1]$, $P_{\eps}$ satisfies
\begin{equation}\label{arespecial}
A\T P_{\eps}A - P_{\eps} -  A\T P_{\eps}B(I +B\T P_{\eps} B)^{-1}B\T
P_{\eps} A +  \eps I = 0 
\end{equation}
Note that for any $\eps>0$, there exists a unique solution of \eqref{arespecial}.

The agents communicate $\xi_i(k)$, which are chosen as $\xi_i(k)=\chi_i(k)$, therefore each agent has access to the following information:
\begin{equation}\label{info1}
\hat{\zeta}_i(k)=\frac{1}{2+d_{in}(i)}\sum_{j=1}^N\bar{\ell}_{ij}\chi_j(k).
\end{equation}
while $\bar{\zeta}_i^d(k)$ is defined by \eqref{zeta_l-nnd}.
\begin{remark}
	\eqref{arespecial} is a special case of the general low-gain $H_2$ discrete algebraic Riccati equation ($H_2$-DARE), which is written as follows:
	\begin{equation}\label{aregeneral}
	A\T P_{\eps}A - P_{\eps} -  A\T P_{\eps}B(R_{\eps}+B\T P_{\eps} B)^{-1}B\T
	P_{\eps} A +  Q_{\eps} = 0 
	\end{equation}
	where $R_{\eps} > 0$, and $Q_{\eps} > 0$ is such that $Q_{\eps} \rightarrow 0$ as $\eps \rightarrow 0.$ In our case, we restrict our attention to $Q_{\eps} = \eps I$ and $R_{\eps} = I.$ However, as shown in \cite{saberi-stoorvogel-sannuti-exter}, when $A$ is neutrally stable, there exists a suitable (nontrivial) choice of $Q_{\eps}$ and $R_{\eps}$ which yields an explicit solution of \eqref{aregeneral}, of form
	\begin{equation}\label{neutral}
	P_{\eps} = \eps P
	\end{equation}
	where $P$ is a positive definite matrix that satisfies $A^T P A \leq P.$
\end{remark}
Our formal result is stated in the following theorem.
\begin{theorem}\label{mainthm1}
	Consider a MAS described by \eqref{neq1} satisfying Assumption \ref{Aass}, with a given $\bar{\kappa}$ and the
	associated exosystem \eqref{solu-cond}. 
	Let a set of nodes
	$\mathscr{C}$ be given which defines the set
	$\mathbb{G}_{\mathscr{C}}^N$ and 	
	let the asssociated network
	communication graph $\mathcal{G} \in \mathbb{G}_{\mathscr{C}}^N$ be
	given by \eqref{zeta_l-nnd}.
	
	Then the scalable regulated state synchronization problem as stated in Problem
	\ref{prob3} is solvable if
	\begin{equation}\label{boundtau}
	\bar{\kappa}\omega_{\max}<\frac{\pi}{2}. 
	\end{equation}
	In particular, there exist $\rho>0.5$ and $\eps^* > 0$ that depend only on $\bar{\kappa}$ and the agent models such that, for any $\eps\in(0,\eps^*]$, the dynamic protocol given by \eqref{pscp1} and \eqref{arespecial} solves the scalable regulated state
	synchronization problem for any $N$ and any graph
	$\mathcal{G}\in\mathbb{G}_{\mathscr{C}}^N$.
\end{theorem}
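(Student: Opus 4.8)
The plan is to use the additional information exchange to \emph{decouple} the network error dynamics, reducing everything to a single scalar-parametrized delayed loop. First I would introduce the regulation error $e_i(k)=x_i(k)-x_r(k)$ and the auxiliary variable $\eta_i(k)=\chi_i(k)-e_i(k)$. Since the exosystem and the agents share the matrix $A$, one immediately gets $e_i(k+1)=Ae_i(k)+Bu_i(k-\kappa_i)$. Using the identity $(2I+D_{\text{in}})^{-1}\bar{L}=I-\bar{D}$ from \eqref{hodt-LDa}, both $\bar{\zeta}_i^d(k)$ and $\hat{\zeta}_i(k)$ can be written as $((I-\bar{D})\otimes I_n)$ acting on the stacked error $e$ and on the stacked $\chi$, respectively. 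Subtracting the $e_i$-recursion from the $\chi_i$-recursion \eqref{pscp1}, the input terms $Bu_i(k-\kappa_i)$ cancel and the whole coupling collapses to
\[
\eta(k+1)=(\bar{D}\otimes A)\,\eta(k),
\]
which is completely decoupled from $e$.

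Next I would dispatch the $\eta$-dynamics. By \cite[Lemma 1]{liu2018regulated}, $\mathcal{G}\in\mathbb{G}_{\mathscr{C}}^N$ guarantees that every eigenvalue of $\bar{D}$ lies in the open unit disc, while Assumption \ref{Aass}(ii) places every eigenvalue of $A$ in the closed unit disc. The eigenvalues of $\bar{D}\otimes A$ are the products of those of $\bar{D}$ and $A$, so the spectral radius is strictly below one and $\bar{D}\otimes A$ is Schur. Hence $\eta_i(k)\to0$ exponentially, at a rate and with bounds independent of $N$ and of the particular graph — precisely what the scale-free claim needs. Substituting $u_i(k)=-\rho K_{\eps}\chi_i(k)=-\rho K_{\eps}(e_i(k)+\eta_i(k))$ then yields
\[
e_i(k+1)=Ae_i(k)-\rho BK_{\eps}\,e_i(k-\kappa_i)-\rho BK_{\eps}\,\eta_i(k-\kappa_i),
\]
so the errors no longer couple across agents and the last term is a vanishing disturbance. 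It therefore suffices to prove that the nominal system $e(k+1)=Ae(k)-\rho BK_{\eps}e(k-\kappa)$ is asymptotically stable for every fixed $\kappa\in\overline{[0,\bar{\kappa}]}$, after which $e_i\to0$ follows from a standard converging-input/converging-state (cascade) argument applied to this stable loop driven by $\eta_i$.

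The core — and the step I expect to be the main obstacle — is this low-gain delay analysis. I would fix $\rho>0.5$ and first certify the \emph{delay-free} loop via the Riccati identity \eqref{arespecial}: a direct expansion gives
\[
(A-\rho BK_{\eps})\T P_{\eps}(A-\rho BK_{\eps})-P_{\eps}=K_{\eps}\T\big[(1-2\rho)I+(1-\rho)^2B\T P_{\eps}B\big]K_{\eps}-\eps I,
\]
which is negative definite once $\rho>0.5$ and $\eps$ is small, because $P_{\eps}\to0$ as $\eps\to0$ (indeed $P_{\eps}=\eps P$ in the neutrally stable case of the Remark); this is the classical gain-margin mechanism behind the bound $\rho>0.5$. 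To bring in the delay I would pass to the characteristic polynomial $\det\big(z^{\kappa}(zI-A)+\rho BK_{\eps}\big)$: as $\eps\to0$ its $n(\kappa+1)$ roots split into $n\kappa$ roots collapsing to the origin and $n$ roots approaching the eigenvalues of $A$. The roots inside the unit disc stay inside; for a boundary eigenvalue $e^{j\omega_0}$ with $\omega_0\le\omega_{\max}$, a first-order perturbation shows the root moves radially at rate proportional to $-\rho\cos(\kappa\omega_0)$, where one uses that $P_{\eps}v$ is (up to scale) the left eigenvector and that stabilizability forces $B\T P_{\eps}v\neq0$ for boundary modes. This motion is strictly inward exactly when $\kappa\omega_0\le\bar{\kappa}\omega_{\max}<\tfrac{\pi}{2}$, which is hypothesis \eqref{boundtau}.

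The delicate part is to make this uniform over all $\kappa\le\bar{\kappa}$ and all boundary frequencies and to extract a single threshold $\eps^*>0$, since the naive root-perturbation argument is only asymptotic. I would close the gap with a Lyapunov--Krasovskii functional built on $e\T P_{\eps}e$ augmented by sums over the delayed interval: rewriting $e(k-\kappa)=e(k)-\sum_{s=k-\kappa}^{k-1}\big(e(s+1)-e(s)\big)$ exposes the delay-induced cross-terms, which are $O(\eps)$ because $K_{\eps}=O(\eps)$, while the negative margin from $\rho>0.5$ and the phase condition \eqref{boundtau} dominate them for all admissible $\kappa$. Since $\rho$ and $\eps^*$ are produced from $(A,B)$ and $\bar{\kappa}$ alone — and the $\eta$-estimate is uniform in the graph — the protocol solves the problem for any $N$ and any $\mathcal{G}\in\mathbb{G}_{\mathscr{C}}^N$, giving the scale-free conclusion.
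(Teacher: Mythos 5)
Your first half coincides with the paper's own proof: the change of variables $e_i=x_i-x_r$, $\eta_i=\chi_i-e_i$ is exactly the paper's decomposition $\tilde{x}_i=x_i-x_r$, $e=\tilde{x}-\chi$ (up to sign), the Schur stability of $\bar{D}\otimes A$ is argued identically via \cite[Lemma 1]{liu2018regulated} and the eigenvalue-product property of the Kronecker product, and the reduction to a per-agent delayed loop driven by a vanishing signal is the same cascade structure the paper exploits. Your delay-free Riccati identity is correct as well; it is in effect a self-contained proof of the real gain margin $\rho>1/2$, which the paper instead obtains from Lemma \ref{lemma-full}.

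The genuine gap is in the core delay step, and you flagged it yourself: the root-perturbation computation is only an $\eps\to 0$ asymptotic statement at isolated boundary eigenvalues, and your proposed Lyapunov--Krasovskii repair does not close it. First, the assertion $K_\eps=O(\eps)$ is false in general: $P_\eps=\eps P$ holds only in the neutrally stable case of the Remark; when $A$ has nontrivial Jordan blocks on the unit circle, $P_\eps$ (hence $K_\eps$) vanishes at a strictly slower rate than $\eps$, so cross-terms of size $O(\|P_\eps\|)$ or $O(\|K_\eps\|\,\|A-I\|)$ need not be dominated by a margin of order $\eps\|e\|^2+(2\rho-1)\|K_\eps e\|^2$. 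Second, and more fundamentally, the phase condition $\bar{\kappa}\omega_{\max}<\pi/2$ cannot emerge from the model transformation $e(k-\kappa)=e(k)-\sum_{s}\bigl(e(s+1)-e(s)\bigr)$: the resulting bounds involve quantities like $\kappa\|A-I\|$ and are blind to the phase rotation $e^{-j\omega\kappa}$, so they produce delay bounds of a structurally different and more conservative form, not $\cos(\bar{\kappa}\omega_{\max})>1/(2\rho)$. What is needed, and what the paper uses, is a frequency-domain criterion for delay difference equations (Lemma \ref{hode-lemma-ddelay}): it suffices to verify
\[
\det\bigl[e^{j\omega}I-A+\rho e^{-j\omega\kappa}BK_\eps\bigr]\neq 0
\]
for all $\omega\in[-\pi,\pi]$ and all $\kappa\in\overline{[0,\bar{\kappa}]}$. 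The delay then enters as a \emph{complex} gain $\lambda=\rho e^{-j\omega\kappa}$, handled in two regimes: for $\pi\geq|\omega|\geq\omega_{\max}+\theta$ a compactness argument gives $\sigma_{\min}(e^{j\omega}I-A)\geq\mu>0$ uniformly and one takes $\eps^*$ so that $\|\rho e^{-j\omega\kappa}BK_\eps\|\leq\mu/2$; for $|\omega|<\omega_{\max}+\theta$ condition \eqref{boundtau} yields precisely $\re\lambda=\rho\cos(\omega\kappa)>1/2$, and the complex gain-margin Lemma \ref{lemma-full} (not your real-$\rho$ Lyapunov identity, which does not cover complex $\lambda$) guarantees $A-\lambda BK_\eps$ is Schur for all sufficiently small $\eps$, uniformly over this compact set of gains. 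Replacing your Lyapunov--Krasovskii step with this two-regime frequency-domain argument would complete the proof.
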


To obtain this result, we need the following lemma.

\begin{lemma}[\cite{zhang-saberi-stoorvogel-continues-discrete}]\label{hode-lemma-ddelay}
	Consider a linear time-delay system
	\begin{equation}\label{hoded-system1}
	x(k+1)=Ax(k)+\sum_{i=1}^{m}A_{i}x(k-\kappa_{i}),
	\end{equation}
	where $x(k)\in\R^{n}$ and $\kappa_{i}\in\N^+$. Suppose
	$A+\sum_{i=1}^{m}A_{i}$ is Schur stable. Then, \eqref{hoded-system1}
	is asymptotically stable if
	\[
	\det[e^{j\omega}I-A-
	\sum_{i=1}^{m}e^{-j\omega\kappa_i^r}A_{i}]\neq 0,
	\]
	for all $\omega\in[-\pi,\pi]$ and for all $\kappa_i\in\overline{[0,\bar{\kappa}]}$ \text{ for} ($i=1,\ldots, N$).
\end{lemma}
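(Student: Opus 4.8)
The plan is to recast asymptotic stability of \eqref{hoded-system1} as a zero-location statement for its characteristic function $\Delta_\kappa(z)=\det(zI-A-\sum_{i=1}^{m}z^{-\kappa_i}A_i)$. First I would lift \eqref{hoded-system1} to a finite-dimensional LTI system by stacking the delayed states over the window $\{x(k),\ldots,x(k-\bar\kappa)\}$; the resulting companion matrix is Schur exactly when \eqref{hoded-system1} is asymptotically stable, and its eigenvalues are precisely the zeros of $\Delta_\kappa$. Hence it suffices to prove that $\Delta_\kappa$ has no zero $z$ with $|z|\geq 1$. Note also that the delayed terms carry strictly positive delays, so $z^{-\kappa_i}\to 0$ as $|z|\to\infty$ and $\Delta_\kappa(z)\to\det(zI-A)$; thus all zeros of $\Delta_\kappa$ are bounded, and in fact contained in $|z|<R$ once $R>\|A\|+\sum_i\|A_i\|$, uniformly for the family introduced below.

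Next I would set up the delay homotopy $\kappa_i(r)=r\kappa_i$, $r\in[0,1]$, with associated family $\Delta_r(z)=\det(zI-A-\sum_i z^{-r\kappa_i}A_i)$. At $r=0$ the delayed terms collapse to $\sum_i A_i$, so the zeros of $\Delta_0$ are exactly the eigenvalues of $A+\sum_i A_i$, which lie in the open unit disc by the Schur hypothesis; thus $\Delta_0$ has no zero in $\{|z|\geq 1\}$. At $r=1$ we recover the target $\Delta_\kappa$. The strategy is then to track the zeros as $r$ grows from $0$ to $1$ and show that none can migrate into $\{|z|\geq 1\}$, so that the exterior zero count stays $0$.

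Concretely, I would fix the uniform outer radius $R$ above (no zero of $\Delta_r$ ever reaches $|z|=R$, for any $r$) and argue that the number of zeros of $\Delta_r$ in the annulus $1\leq|z|\leq R$ is an integer depending continuously on $r$, hence constant. By continuity of the zeros of the analytic family $\Delta_r$, a zero can enter or leave the exterior region only by crossing one of the boundary circles; since $|z|=R$ is never reached, the only possible exit/entry is through the unit circle $|z|=1$. Therefore, if I can exclude zeros of $\Delta_r$ on $|z|=1$ for every $r$, the exterior count remains $0$ from $r=0$ through $r=1$, which is exactly the desired conclusion.

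The decisive step is excluding unit-circle zeros along the entire homotopy, and I expect this to be the main obstacle. A zero on $|z|=1$ means $\det(e^{j\omega}I-A-\sum_i e^{-j\omega r\kappa_i}A_i)=0$ for some $\omega\in[-\pi,\pi]$ and $r\in[0,1]$; writing $\kappa_i^r:=r\kappa_i\in[0,\bar\kappa]$, this is precisely the degenerate configuration ruled out by the frequency condition in the statement, so no boundary crossing can occur. Two points require care here, and they are the crux of the argument. First, this is exactly where robustness of the frequency test over the \emph{whole} delay range $[0,\bar\kappa]$ — not merely the nominal delays $\kappa_i$ — is indispensable, since the running delays $\kappa_i^r=r\kappa_i$ sweep a continuum of values. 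Second, the intermediate effective delays $r\kappa_i$ are generally non-integer, so the discrete-time subtlety lives here: I must read the condition as covering the running delays $\kappa_i^r$ in $[0,\bar\kappa]$, and I must keep the zero-counting rigorous despite the multivaluedness of $z^{-r\kappa_i}$ off the unit circle. The clean way to close this gap is to restrict the homotopy to rational scalings $r\kappa_i=p_i/q$ with a common denominator $q$ and substitute $w=z^{1/q}$, turning each $\Delta_r$ into a genuine polynomial in $w$ for which the unit circle maps to the unit circle and the argument principle applies unambiguously; density of such rational scalings then recovers the full interval $r\in[0,1]$.
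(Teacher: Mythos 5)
A preliminary remark: the paper itself contains no proof of this lemma --- it is imported verbatim from the cited reference --- so your attempt can only be judged on its own merits, not against an in-paper argument. Your skeleton is the standard (and correct) one: reduce asymptotic stability to the absence of zeros of $\Delta_\kappa(z)=\det(zI-A-\sum_i z^{-\kappa_i}A_i)$ in $|z|\geq 1$, deform the delays along $r\mapsto r\kappa_i$, confine all zeros to a bounded annulus, and use the frequency hypothesis to forbid crossings of $|z|=1$; your reading of the hypothesis as sweeping \emph{real} delays $\kappa_i^r\in[0,\bar{\kappa}]$ is also the intended and necessary one. You even identify the crux correctly: $z^{-r\kappa_i}$ is multivalued for non-integer $r\kappa_i$. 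The gap is that your proposed repair --- rational scalings $r\kappa_i=p_i/q$ and the substitution $w=z^{1/q}$ --- destroys exactly the step it is meant to save. The roots of the resulting polynomial $P_r(w)$ comprise \emph{all} $q$ branches of the fractional-power equation, and the circle $|w|=1$ corresponds to the \emph{unfolded} frequency range: a root $w=e^{j\nu}$ of $P_r$ means $\det(e^{j\omega}I-A-\sum_i e^{-j\omega r\kappa_i}A_i)=0$ with $\omega=q\nu\in[-q\pi,q\pi]$. Since $e^{-j\omega r\kappa_i}$ is not $2\pi$-periodic in $\omega$ when $r\kappa_i\notin\mathbb{Z}$, a configuration with $|\omega|>\pi$ cannot be folded back into the hypothesis, which quantifies only over $\omega\in[-\pi,\pi]$. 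So the lemma's condition does not exclude unit-circle crossings of your polynomial family, and your invariant (``the number of zeros in $1\leq|z|\leq R$ is constant in $r$'') is unjustified.

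It is not merely unjustified; it is false. Take $n=1$, $A=0.5$, $A_1=-0.6$, $\kappa_1=\bar{\kappa}=1$. Then $A+A_1=-0.1$ is Schur; the frequency condition holds for all $\kappa^r\in[0,1]$ (a circle crossing needs $|e^{j\omega}-0.5|=0.6$, forcing $\omega\approx\pm 0.471$, and then phase matching forces $\omega\kappa^r\equiv 2.28\pmod{2\pi}$, unattainable for $\kappa^r\in[0,1]$); and the delay-one system is indeed stable, since $z^2-0.5z+0.6$ has roots of modulus $\sqrt{0.6}\approx 0.775$. Yet at $r=1/2$ your polynomial is $P_{1/2}(w)=w^3-0.5w+0.6$, which has a real root near $w\approx-1.04$, strictly outside the unit disc --- it is precisely a non-principal-branch zero, $z^{-1/2}=-|z|^{-1/2}$ at $z=w^2\approx 1.08$. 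So along your homotopy the exterior zero count runs $0\to 1\to 0$: it is not constant, crossings do occur (at unfolded frequencies the hypothesis says nothing about), and ``density of rational scalings'' cannot restore an invariance that already fails at rational points. A workable repair must interpolate the delay by a \emph{single-valued rational} function of $z$ whose unit-circle phase stays in the principal range: for instance, pass from integer delay $m$ to $m+1$ by replacing the extra factor $z^{-1}$ with the all-pass function $\phi_s(z)=(1+sz)/(z+s)$, $s\in[0,1]$. On the circle, $z^{-m}\phi_s(e^{j\omega})=e^{-j(m+\beta)\omega}$ with $\beta\in[0,1]$, so every intermediate configuration is exactly a real delay in $[m,m+1]$ and is excluded by the hypothesis, while off the circle one has a polynomial family of fixed degree to which root continuity and the argument principle apply; induction on the delay, initialized at the Schur matrix $A+\sum_i A_i$, then proves the lemma.
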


\begin{lemma}[\cite{lee-kim-shim}]\label{lemma-full}
	Consider a linear uncertain system,
	\begin{equation}\label{linunc}
	x(k+1)=Ax(k)+\lambda Bu(k), \qquad x(0)=x_{0},
	\end{equation}
	where $\lambda\in \mathbb{C}$ is unknown. Assume that $(A,B)$ is
	stabilizable and $A$ has all its eigenvalues in the closed unit
	disc. A low-gain state feedback $u=F_{\delta}x$ is constructed,
	where 
	\begin{equation}\label{fdelta}
	F_{\delta}=-(B\T P_{\delta}B+I)^{-1}B\T P_{\delta}A, 
	\end{equation}
	with $P_{\delta}$ being the unique positive definite solution of the
	$H_{2}$ algebraic Riccati equation,
	\begin{equation}\label{hoded-ARE-sept-17}
    P_\delta=A\T P_\delta A + \delta I - A\T P_\delta B(B\T P_\delta B+I)^{-1}B\T P_\delta A. 
    \end{equation}
	Then, $A+\lambda BF_{\delta}$ is Schur stable for any $\lambda\in \C$
	satisfying,
	\begin{equation}\label{hoded-stablefield}
	\lambda\in \Omega_{\delta}:=\left \{ z\in \mathbb{C}: \left |
	z-\left (1+\tfrac{1}{\gamma_{\delta}}\right ) \right
	|<\tfrac{\sqrt{1+\gamma_{\delta}}}{\gamma_{\delta}}\right \}, 
	\end{equation}
	where $\gamma_{\delta}=\lambda_{\max}(B\T P_{\delta}B)$.  As
	$\delta\rightarrow 0$, $\Omega_{\delta}$ approaches the set
	\[
	H_1:=\{z\in \C :\re z >\tfrac{1}{2}\}
	\]
	in the sense that any compact subset of $H_1$ is contained in
	$\Omega_{\delta}$ for a $\delta$ small enough.
\end{lemma}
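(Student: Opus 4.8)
The plan is to use the positive definite solution $P_\delta$ of the $H_2$-DARE \eqref{hoded-ARE-sept-17} as a Lyapunov matrix for the complex closed-loop matrix $M(\lambda):=A+\lambda BF_\delta$, and to reduce the Schur-stability requirement to a single scalar inequality in $\lambda$ that I then match against the disc $\Omega_\delta$. Concretely, since $P_\delta\succ0$ is Hermitian, it suffices to prove $M(\lambda)\T P_\delta M(\lambda)-P_\delta\prec0$: if $M(\lambda)v=\mu v$ with $v\neq0$, then $|\mu|^2\,v\T P_\delta v<v\T P_\delta v$ forces $|\mu|<1$, so every eigenvalue lies in the open unit disc. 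Here $\T$ denotes conjugate transpose, so the argument remains valid for complex $\lambda$.

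The algebraic core is to evaluate $M(\lambda)\T P_\delta M(\lambda)-P_\delta$ using the DARE. Writing $S:=B\T P_\delta A$ and $R:=I+B\T P_\delta B\succeq I$, so that $F_\delta=-R^{-1}S$ by \eqref{fdelta}, I would expand the product and substitute $A\T P_\delta A=P_\delta-\delta I+S\T R^{-1}S$ from \eqref{hoded-ARE-sept-17} together with $B\T P_\delta B=R-I$. After cancellation the cross terms collect into a factor $|1-\lambda|^2$, and I expect the clean identity
\[
M(\lambda)\T P_\delta M(\lambda)-P_\delta=-\delta I+|1-\lambda|^2\,S\T R^{-1}S-|\lambda|^2\,S\T R^{-2}S.
\]
Since $\gamma_\delta=\lambda_{\max}(B\T P_\delta B)$ gives $R\preceq(1+\gamma_\delta)I$, hence $R^{-2}\succeq\tfrac{1}{1+\gamma_\delta}R^{-1}$, a congruence yields $S\T R^{-2}S\succeq\tfrac{1}{1+\gamma_\delta}S\T R^{-1}S$, so that
\[
M(\lambda)\T P_\delta M(\lambda)-P_\delta\preceq-\delta I+\Bigl(|1-\lambda|^2-\tfrac{|\lambda|^2}{1+\gamma_\delta}\Bigr)S\T R^{-1}S.
\]
Because $S\T R^{-1}S\succeq0$ and the $-\delta I$ term is strictly negative, it then suffices that the scalar coefficient be $\le0$, i.e.\ $(1+\gamma_\delta)|1-\lambda|^2\le|\lambda|^2$. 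Completing the square in $\lambda$ turns this into exactly $\bigl|\lambda-(1+\tfrac1{\gamma_\delta})\bigr|^2\le\tfrac{1+\gamma_\delta}{\gamma_\delta^2}$, the closed disc whose interior is $\Omega_\delta$ in \eqref{hoded-stablefield}; this settles the first assertion.

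For the limiting statement I would work from the equivalent form $\lambda\in\Omega_\delta\iff(1+\gamma_\delta)|1-\lambda|^2<|\lambda|^2$. The decisive input is the standard low-gain fact that, because Assumption \ref{Aass}(ii) places all eigenvalues of $A$ in the closed unit disc (no anti-stable modes), the stabilizing solution satisfies $P_\delta\to0$ as $\delta\to0$, whence $\gamma_\delta\to0^+$. The defining inequality then converges to $|1-\lambda|^2<|\lambda|^2$, i.e.\ $\re\lambda>\tfrac12$, which is $H_1$. To upgrade this to uniform containment of a compact set $K\subset H_1$, note there is $\eta>0$ with $\re\lambda\ge\tfrac12+\eta$ and $|\lambda|\le M$ on $K$; then $(1+\gamma_\delta)|1-\lambda|^2-|\lambda|^2=-(2\re\lambda-1)+\gamma_\delta|1-\lambda|^2\le-2\eta+\gamma_\delta(1+M)^2$, which is negative once $\gamma_\delta<2\eta/(1+M)^2$, hence for all sufficiently small $\delta$, giving $K\subset\Omega_\delta$.

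The main obstacle I anticipate is the low-gain convergence $\gamma_\delta\to0$: establishing $P_\delta\to0$ rigorously relies on the semistable structure of $A$ (all eigenvalues in the closed unit disc) and is exactly what makes the feedback genuinely low-gain. The DARE manipulation producing the displayed identity is the other delicate point, but it is a finite symbolic computation; once it is in hand, both the disc characterization and the limit follow from the elementary estimates above.
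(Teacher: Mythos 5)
The paper never proves this lemma at all: it is imported verbatim from \cite{lee-kim-shim} as a quoted result, so there is no internal proof to compare yours against, and your argument has to stand on its own. It does. I checked your central DARE identity: writing $S=B\T P_\delta A$, $R=I+B\T P_\delta B$, $M(\lambda)=A+\lambda BF_\delta$, and using $\T$ for conjugate transpose as in the paper, expansion of $M(\lambda)\T P_\delta M(\lambda)$ and substitution of $A\T P_\delta A=P_\delta-\delta I+S\T R^{-1}S$ from \eqref{hoded-ARE-sept-17} gives exactly
\[
M(\lambda)\T P_\delta M(\lambda)-P_\delta=-\delta I+|1-\lambda|^2\,S\T R^{-1}S-|\lambda|^2\,S\T R^{-2}S,
\]
as you predicted. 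Your subsequent steps are also sound: $R\preceq(1+\gamma_\delta)I$ implies $R^{-2}\succeq(1+\gamma_\delta)^{-1}R^{-1}$ because both sides are functions of $R$ and the scalar inequality holds on its spectrum; congruence by $S$ preserves the semidefinite ordering; and completing the square in $(1+\gamma_\delta)|1-\lambda|^2\leq|\lambda|^2$ yields precisely the disc of center $1+1/\gamma_\delta$ and radius $\sqrt{1+\gamma_\delta}/\gamma_\delta$. In fact you prove slightly more than the statement asks: Schur stability on the \emph{closed} disc, of which $\Omega_\delta$ is the interior. The compact-containment argument for the limit $\Omega_\delta\to H_1$, via $\re\lambda\geq\tfrac12+\eta$ and $|\lambda|\leq M$ on the compact set, is likewise correct.

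The one ingredient you assert rather than prove is the low-gain fact $P_\delta\to0$ (hence $\gamma_\delta\to0$) as $\delta\to0$, which indeed requires both stabilizability of $(A,B)$ and all eigenvalues of $A$ in the closed unit disc, and fails otherwise. You flag this honestly; it is a standard result in the low-gain literature (see, e.g., \cite{saberi-stoorvogel-sannuti-exter}), and relying on it is no worse than the paper's own decision to cite the entire lemma from \cite{lee-kim-shim} without proof. Net assessment: your proof is correct and, apart from that one standard citation, self-contained where the paper is not; it is also almost certainly the same Riccati-based Lyapunov argument that underlies the original reference.
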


\begin{proof}[Proof of Theorem \ref{mainthm1}]
	Firstly, let $\tilde{x}_i=x_i-x_r$, we have
	\[
	{\tilde{x}}_i(k+1)=A{\tilde{x}}_i(k)+Bu_i(k-\kappa_i)
	\]

	We define 
	\begin{align*}
	&\tilde{x}(k)=\begin{pmatrix}
	\tilde{x}_1(k)\\\vdots\\\tilde{x}_N(k)
	\end{pmatrix} , 
	\chi(k)=\begin{pmatrix}
	\chi_1(k)\\\vdots\\\chi_N(k)
	\end{pmatrix},  \\
	&\tilde{x}^\kappa(k)=\begin{pmatrix}
	\tilde{x}_1(k-\kappa_1)\\\vdots\\\tilde{x}_N(k-\kappa_N)
	\end{pmatrix},\text{ and }  
	\chi^\kappa(k)=\begin{pmatrix}
	\chi_1(k-\kappa_1)\\\vdots\\\chi_N(k-\kappa_N)
	\end{pmatrix}
	\end{align*}
	then we have the following closed-loop system
	\begin{equation}
	\begin{system*}{ll}
	{\tilde{x}}(k+1)=&(I\otimes A) \tilde{x}(k)- \rho(I\otimes BK_\eps)\chi^\kappa(k)\\
	{\chi}(k+1)=&(I\otimes A) \chi(k)-\rho(I\otimes BK_\eps)\chi^\kappa(k)\\
	&\qquad+[(I-\bar{D})\otimes A](\tilde{x}(k)-\chi(k)).
	\end{system*}
	\end{equation}

	Let $e(k)=\tilde{x}(k)-\chi(k)$, we can obtain  
	\begin{equation}\label{newsystem2}
	\begin{system*}{ll}
	{\tilde{x}}(k+1)=&(I\otimes A) \tilde{x}(k)- \rho (I\otimes BK_\eps)\tilde{x}^\kappa(k)\\
	&\qquad+ \rho (I\otimes BK_\eps)e^\kappa(k)\\
	{e}(k+1)=&(\bar{D}\otimes A)e(k)
	\end{system*}
	\end{equation}
	where $e^\kappa(k)=\tilde{x}^\kappa(k)-\chi^\kappa(k)$.
	The proof proceeds in two steps.
	
	\textbf{Step 1:}  First, we prove the stability of system \eqref{newsystem2} without delays, i.e.	
	\begin{equation}\label{newsystem22}
	\begin{system*}{l}
	{\tilde{x}}(k+1)=(I\otimes A) \tilde{x}(k)- \rho (I\otimes BK_\eps)\tilde{x}(k)+ \rho (I\otimes BK_\eps)e(k)\\
	{e}(k+1)=(\bar{D}\otimes A)e(k)
	\end{system*}
	\end{equation}
	where $\bar{D}=[\bar{d}_{ij}]\in\R^{N\times N}$ and we have that the eigenvalues of $\bar{D}$  are in open unit disk.
	The eigenvalues of $\bar{D}\otimes A$ are of the form
	$\lambda_i \mu_j$, with $\lambda_i$ and $\mu_j$ eigenvalues of
	$\bar{D}$ and $A$, respectively \cite[Theorem 4.2.12]{horn-johnson}. Since $|\lambda_i|<1$ and
	$|\mu_j|\leq 1$, we find $\bar{D}\otimes A$ is Schur stable. Then we have
	\begin{equation}\label{estable}
	\lim_{k\to \infty}e_i(k)\to 0
	\end{equation}
	Therefore, we have that the dynamics for $e_i(k)$ is asymptotically stable.

	According to the above result, for \eqref{newsystem22} we just need to prove the stability of 
	\[
	{\tilde{x}}(k+1)=[I\otimes (A- \rho BK_\eps)]\tilde{x}(k)
	\]
	or the stability of 
	\[
	A- \rho BK_\eps
	\]
	
	Based on Lemma \ref{lemma-full}, there exist $\rho >0.5$ and $\eps* >0$ such that $A- \rho BK_\eps$ is Schur stable
	for $\eps \in (0,\eps^*]$.
	
	\textbf{Step 2:} In this step, we consider \eqref{newsystem2}, i.e. system in
	the presence of delays. Then, we rewrite \eqref{newsystem2} as
	\[
	\begin{pmatrix}
	{\tilde{x}}(t+1)\\
	{\bar{e}}(t+1)
	\end{pmatrix}=A_0^d
	\begin{pmatrix}
	{\tilde{x}}(t)\\
	{\bar{e}}(t)
	\end{pmatrix}+A_1^d\begin{pmatrix}
	{\tilde{x}}^\tau(t)\\
	{\bar{e}}^\tau(t)
	\end{pmatrix}
	\]
	where
	\begin{align*}
	&A_0^d=\begin{pmatrix}
	I\otimes A&0\\
	0&\bar{D}\otimes A
	\end{pmatrix}\\
	&A_1^d=\begin{pmatrix}
	-\rho_d I\otimes
	BK_\eps&\rho_d I\otimes BK_\eps\\
	0&0
	\end{pmatrix}.
	\end{align*}
	Thus, from Lemma \ref{hode-lemma-ddelay}, we need to prove 
	\begin{equation}\label{bound-conditiond1}
	\det[e^{j\omega} I-A_0^d- A_{e^\tau} A_1^d]\neq 0. 
	\end{equation}
	Since $A_0^d$ and $A_1^d$ are upper triangular matrices, we can
	rewrite \eqref{bound-conditiond1} as
	\[
	\begin{system*}{l}
	\det[e^{j\omega} I-I\otimes A+ \rho_d e^{\tau}\otimes BK_\eps]\neq 0,\\
	\det[e^{j\omega} I-\bar{D}\otimes A]\neq 0.
	\end{system*} 
	\]
	Due to the fact that $\bar{D}\otimes A$ is Schur stable,
	we just need to prove
	\[
	\det[e^{j\omega} I-I\otimes A+ \rho_d e^{\tau}\otimes BK_\eps]\neq 0 
	\]
	or 
	\begin{equation}\label{bound-condition}
	\det[e^{j\omega} I-A+ \rho e^{-j\omega\kappa_i}BK_\eps]\neq 0
	\end{equation}
	for $\omega\in [-\pi,\pi]$ and $\kappa_i\in\overline{[0,\bar{\kappa}]}$. We choose $\rho$, such that  
	\begin{equation}	
	\rho\cos(\bar{\kappa}\omega_{\max}) >\frac{1}{2}.
	\end{equation}
	Let $\rho$ be fixed. Meanwhile, we note that there exists a $\theta$ such that
	\[
	\rho>\frac{1}{2\cos(\bar{\kappa}\omega)}, \forall |\omega|<\omega_{\max}+\theta
	\]
	
	Then, we split the proof of \eqref{bound-condition} into
	two cases where $\pi\geq|\omega|\geq \omega_{\max}+\theta$ and $|\omega|<\omega_{\max}+\theta$ respectively.
	
	If $\pi\geq|\omega|\geq \omega_{\max}+\theta$, we have
	$\det(e^{-j\omega}I-A)\neq 0$, which yields $\sigma_{\min}(e^{j\omega} I-A)>0$. Because $\sigma_{\min}(e^{j\omega} I-A)$ depends continuously on $\omega$ and the set $\{\pi\geq|\omega|\geq \omega_{\max}+\theta\}$ is compact. Hence, there exists a $\mu>0$ such
	that
	\[
	\sigma_{\min}(e^{j\omega} I-A)>\mu,\quad \forall \omega \text{ such
		that } |\omega|\geq \omega_{\max}+\theta.
	\]
	Given $\rho$, there exists $\eps^*>0$ such that $\|\rho e^{-j\omega\kappa_i} BK_{\eps}\|\leq \mu/2$. Then, we obtain
	\[
	\sigma_{\min}(e^{j\omega}I-A-\rho e^{-j\omega\kappa_i}B K_\eps)\geq \mu-\tfrac{\mu}{2}\geq \tfrac{\mu}{2}.
	\]
	Therefore, condition
	\eqref{bound-condition} holds for
	$\pi\geq|\omega|\geq \omega_{\max}+\theta$.
	
	Now, it remains to show that condition \eqref{bound-condition} holds for $|\omega|<\omega_{\max}+\theta$. We find that
	\[
	-\omega\kappa_i<|\omega| \bar{\kappa}\leq\dfrac{\pi}{2},
	\]
	and hence 
	\[
	\rho \cos(-\omega\kappa_i)>\rho \cos(|\omega| \bar{\kappa})>\frac{1}{2}.
	\]
	It implies that there exists a small enough $\eps$ such that
	\[
	A- \rho e^{-j\omega\kappa_i}BK_\eps
	\]
	is Schur stable based on Lemma \ref{lemma-full} for the fixed $\rho$.
	Therefore,  \eqref{bound-condition} 
	holds for $|\omega|<\omega_{\max}+\theta$ for a small enough $\eps$ and a fixed $\rho$ satisfying
	\[
	\rho>\frac{1}{2\cos(\bar{\kappa}\omega_{\max} )}.
	\] 
	Thus, we can obtain the regulated state synchronization result based on Lemma \ref{hode-lemma-ddelay}.		
\end{proof}
%\vspace{-.5cm}
\subsection{Partial-state coupling}
In this subsection, we will consider the case via partial-state coupling.
We design the following dynamic protocol with additional information exchanges as follows.
\begin{equation}\label{pscp3}
\begin{system}{cll}
\hat{x}_i(k+1) &=& A\hat{x}_i(k)+B\hat{\zeta}_{i2}(k)+F(\bar{\zeta}_i^d(k)-C\hat{x}_i(k)) \\
\chi_i(k+1) &=& A\chi_i(k)+Bu_i(k-\kappa_i)+A\hat{x}_i(k)-A\hat{\zeta}_{i1}(k)\\
u_i(k) &=& -\rho K_{\eps}\chi_i(k),
\end{system}
\end{equation}
for $i=1,\ldots,N$ where $F$ is a pre-design matrix such that $A-FC$ is Schur stable,
\[
K_\eps=(I+B\T P_\eps B)^{-1}B\T P_\eps A,
\]
and $\rho>0$. $\eps$ is a parameter satisfying
$\eps\in (0,1]$, $P_{\eps}$ satisfies \eqref{arespecial} and is the unique solution of \eqref{arespecial} for any $\eps>0$. $\rho$ and $\omega_{\max}$ are defined in \eqref{pscp1}.
In this protocol, the agents communicate $\xi_i=(\xi_{i1}\T,\xi_{i2}\T)\T$ where $\xi_{i1}(k)=\chi_i(k)$ and $\xi_{i2}(k)=u_i(k-\kappa_i)$, therefore each agent has access to the additional information $\hat{\zeta}_i=(\hat{\zeta}_{i1}\T,\hat{\zeta}_{i2}\T)\T$:

\begin{equation}\label{add_1}
\hat{\zeta}_{i1}(k)=\frac{1}{2+d_{in}(i)}\sum_{j=1}^N\bar{\ell}_{ij}\chi_j(k),
\end{equation}
and
\begin{equation}\label{add_2}
\hat{\zeta}_{i2}(k)=\frac{1}{2+d_{in}(i)}\sum_{j=1}^{N}\bar{\ell}_{ij}u_j(k-\kappa_j).
\end{equation}
$\bar{\zeta}_i^d(k)$ is also defined as \eqref{zeta_l-nd}. Then we have the following theorem for MAS via partial-state coupling.

\begin{theorem}\label{mainthm2}
	Consider a MAS described by \eqref{eq1} satisfying Assumption \ref{Aass}, with a given $\bar{\kappa}$ and the
	associated exosystem \eqref{solu-cond}. 
	Let a set of nodes
	$\mathscr{C}$ be given which defines the set
	$\mathbb{G}_{\mathscr{C}}^N$ and 	
	let the asssociated network
	communication graph $\mathcal{G} \in \mathbb{G}_{\mathscr{C}}^N$ be
	given by \eqref{zeta_l-nd}.
	
	Then the scalable regulated state synchronization problem as stated in Problem
	\ref{prob3} is solvable if \eqref{boundtau} holds.
%	\begin{equation}\label{boundtau2}
%	\bar{\kappa}\omega_{\max}<\frac{\pi}{2}. 
%	\end{equation}
	In particular, there exist $\rho>1$ and $\eps^* > 0$ that depend only on $\bar{\kappa}$ and the agent models such that, for any $\eps\in(0,\eps^*]$, the dynamic protocol given by \eqref{pscp3} and \eqref{arespecial} solves the scalable regulated state
	synchronization problem for any $N$ and any graph
	$\mathcal{G}\in\mathbb{G}_{\mathscr{C}}^N$.
\end{theorem}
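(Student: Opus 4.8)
The plan is to follow the two-step strategy of Theorem \ref{mainthm1}, but first to pass to error coordinates that separate the observer from the synchronization dynamics. Writing $\tilde{x}_i=x_i-x_r$ and stacking the agents, the regulation error obeys $\tilde{x}(k+1)=(I\otimes A)\tilde{x}(k)-\rho(I\otimes BK_\eps)\chi^\kappa(k)$, where $\chi^\kappa$ collects the delayed controller states. Using the identity $\bar{L}=(2I+D_{\text{in}})(I-\bar{D})$ implicit in \eqref{hodt-LDa}, I would rewrite the normalized couplings \eqref{zeta_l-nd}, \eqref{add_1} and \eqref{add_2} in stacked form as $\bar{\zeta}^d=[(I-\bar{D})\otimes C]\tilde{x}$, $\hat{\zeta}_1=[(I-\bar{D})\otimes I]\chi$ and $\hat{\zeta}_2=-\rho[(I-\bar{D})\otimes K_\eps]\chi^\kappa$. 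Introducing $\bar{x}=[(I-\bar{D})\otimes I]\tilde{x}$ (so that $\bar{\zeta}^d=(I\otimes C)\bar{x}$), the observer error $\tilde{e}=\bar{x}-\hat{x}$, and $e=\tilde{x}-\chi$, the aim is to show that the closed loop \eqref{pscp3} becomes block triangular in $(\tilde{x},e,\tilde{e})$.

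\textbf{Step 1 (delay-free error subsystems).} Substituting $\hat{\zeta}_2$ and the innovation $\bar{\zeta}^d-(I\otimes C)\hat{x}=(I\otimes C)\tilde{e}$ into the $\hat{x}$-equation, I expect the delayed term $-\rho[(I-\bar{D})\otimes BK_\eps]\chi^\kappa$ entering $\hat{x}$ through $\hat{\zeta}_2$ to cancel the identical term produced by $\bar{x}$, leaving the observer error autonomous:
\[
\tilde{e}(k+1)=[I\otimes(A-FC)]\tilde{e}(k).
\]
Since $F$ is chosen so that $A-FC$ is Schur stable, $\tilde{e}(k)\to 0$. Next, substituting $\hat{\zeta}_1$ into the $\chi$-equation and using $\hat{x}=\bar{x}-\tilde{e}$, the synchronization error should satisfy
\[
e(k+1)=(\bar{D}\otimes A)e(k)+(I\otimes A)\tilde{e}(k),
\]
again free of delay (the $\rho(I\otimes BK_\eps)\chi^\kappa$ terms cancel). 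As in Theorem \ref{mainthm1}, $\bar{D}\otimes A$ is Schur stable because the eigenvalues of $\bar{D}$ lie in the open unit disc and those of $A$ in the closed unit disc; hence this Schur-stable cascade driven by the vanishing $\tilde{e}$ gives $e(k)\to 0$.

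\textbf{Step 2 (delayed synchronization dynamics).} With $\chi=\tilde{x}-e$ the regulation error becomes
\[
\tilde{x}(k+1)=(I\otimes A)\tilde{x}(k)-\rho(I\otimes BK_\eps)\tilde{x}^\kappa(k)+\rho(I\otimes BK_\eps)e^\kappa(k),
\]
which is exactly the full-state error equation of Theorem \ref{mainthm1} perturbed by the delayed $e^\kappa$. I would then assemble $z=(\tilde{x},e,\tilde{e})$ into a single time-delay system and invoke Lemma \ref{hode-lemma-ddelay}. The block-triangular structure makes the zero-delay sum matrix block upper triangular with diagonal blocks $I\otimes(A-\rho BK_\eps)$, $\bar{D}\otimes A$ and $I\otimes(A-FC)$, all Schur stable (the first by Lemma \ref{lemma-full} once $\rho>\tfrac12$ and $\eps$ is small), and it factorizes the characteristic determinant as a product of $\det[e^{j\omega}I-\bar{D}\otimes A]$, $\det[e^{j\omega}I-I\otimes(A-FC)]$ and $\prod_{i=1}^{N}\det[e^{j\omega}I-A+\rho e^{-j\omega\kappa_i}BK_\eps]$. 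The first two factors are nonzero on the unit circle by Schur stability, so only the last survives, which is precisely condition \eqref{bound-condition} of Theorem \ref{mainthm1}. Choosing $\rho$ with $\rho\cos(\bar{\kappa}\omega_{\max})>\tfrac12$ (admissible since \eqref{boundtau} holds) and $\eps$ small, the same splitting over the regimes $|\omega|\geq\omega_{\max}+\theta$ and $|\omega|<\omega_{\max}+\theta$ together with Lemma \ref{lemma-full} establishes nonvanishing, so Lemma \ref{hode-lemma-ddelay} yields asymptotic stability of $z$, hence $\tilde{x}(k)\to 0$ and \eqref{synchro} holds.

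The main obstacle is Step 1: verifying that the additional exchanges $\hat{\zeta}_{i1},\hat{\zeta}_{i2}$ together with the observer innovation conspire to cancel \emph{all} delayed cross terms, so that the $e$- and $\tilde{e}$-dynamics are genuinely delay-free and decoupled from the delayed $\tilde{x}$-loop; this is what reduces the partial-state problem to the full-state one. Once this algebraic cancellation is confirmed, the delay analysis is identical to that of Theorem \ref{mainthm1}, and the stated bound \eqref{boundtau} with $\rho>1$ (a convenient sufficient value for $\rho>\tfrac{1}{2\cos(\bar{\kappa}\omega_{\max})}$) carries over verbatim.
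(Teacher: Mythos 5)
Your proposal is correct and follows essentially the same route as the paper's own proof: the same error coordinates $e=\tilde{x}-\chi$ and $\tilde{e}=[(I-\bar{D})\otimes I]\tilde{x}-\hat{x}$ (the paper's $\bar{e}$), the same block-triangular reduction to a Schur-stable observer error and disagreement dynamics plus the delayed full-state loop of Theorem \ref{mainthm1}, and the same invocation of Lemmas \ref{hode-lemma-ddelay} and \ref{lemma-full} to verify \eqref{bound-condition} under $\rho\cos(\bar{\kappa}\omega_{\max})>\tfrac{1}{2}$ (indeed, you verify the delay-term cancellations in Step 1 more explicitly than the paper does). The only caveat is your closing parenthetical: $\rho>1$ does not by itself imply $\rho>\tfrac{1}{2\cos(\bar{\kappa}\omega_{\max})}$, since that threshold exceeds $1$ when $\bar{\kappa}\omega_{\max}$ is near $\tfrac{\pi}{2}$, so the existence claim should be read as choosing $\rho>\max\bigl\{1,\tfrac{1}{2\cos(\bar{\kappa}\omega_{\max})}\bigr\}$, which is exactly what your proof body does.
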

%To obtain this result, we need the following lemma.
%\begin{lemma}[\cite{wang-saberi-stoorvogel-grip-yang}]\label{lemma-partial}
%	Consider a linear uncertain system,
%	\begin{equation}\label{hoded-systemapp}
%	\begin{system}{cl}
%	x(k+1)&=Ax(k)+\lambda Bu(k), \quad x(0)=x_{0}\\
%	y(k)&=Cx(k),
%	\end{system}
%	\end{equation}
%	where $\lambda\in \mathbb{C}$ is unknown. Assume that $(A,B)$ is
%	stabilizable, $(A,C)$ is detectable, and $A$ has all its eigenvalues
%	in the closed unit disc.  Design a low-gain dynamic measurement feedback
%	controller as
%	\begin{equation}\label{hoded-controlapp}
%	\begin{system}{cl}
%	\chi(k+1)&=A\chi(k)-K\left[ y(k)-C\chi(k)\right], \quad
%	\chi(0)=\chi_{0} \\
%	u(k)&=F_{\delta}\chi(k),
%	\end{system}
%	\end{equation}
%	where $K$ is selected such that $A+KC$ is Schur stable while
%	$F_\delta$ is given by \eqref{fdelta}.
%	
%	Then, for any compact set
%	\[
%	\mathcal{S}\subset H_2 :=\{z\in \mathbb{C}: \re z >1\},
%	\]
%	there exists a $\delta^{*}$ such that for all $\delta\in (0,
%	\delta^{*}]$, the closed-loop system of \eqref{hoded-systemapp} and
%	\eqref{hoded-controlapp} is asymptotically stable for any
%	$\lambda\in \mathcal{S}$.
%\end{lemma}	

\begin{proof}[Proof of Theorem \ref{mainthm2}]
	Similar to Theorem \ref{mainthm1}, let $\tilde{x}_i(k)=x_i(k)-x_r(k)$, we have
	\[
	\begin{system}{cll}
	\tilde{x}_i(k+1)&=&A{\tilde{x}}_i(k)+Bu_i(k-\kappa_i)\\
	\hat{x}_i(k+1) &=& A\hat{x}_i(k)+B\hat{\zeta}_{i2}(k)+F(\bar{\zeta}_i^d(k)-C\hat{x}_i(k)) \\
	{\chi}_i(k+1) &=& A\chi_i(k)+Bu_i(k-\kappa_i)+\hat{x}_i(k)-\hat{\zeta}_{i1}(k)
	\end{system}
	\]
	
	Then we have the following closed-loop system
	\begin{equation}
	\begin{system*}{ll}
	\tilde{x}(k+1)=&(I\otimes A) \tilde{x}(k)-\rho (I\otimes BK_\eps)\chi^\kappa(k)\\
	\hat{x}(k+1) =& I\otimes (A-FC)\hat{x}(k)-\rho[(I-\bar{D})\otimes B K_\eps]\chi^\kappa(k)\\
	&\qquad\qquad+[(I-\bar{D})\otimes FC]\tilde{x}(k) \\
	\chi(k+1) =& [(I-\bar{D})\otimes A]\chi(k)-\rho(I\otimes BK_\eps)\chi^\kappa(k)+\hat{x}(k)
	\end{system*}
	\end{equation}
	by defining $e=\tilde{x}-\chi$ and $\bar{e}=[(I-\bar{D})\otimes I]\tilde{x}-\hat{x}$, we obtain  
	\begin{equation}\label{newsystem3}
	\begin{system*}{l}
	\tilde{x}(k+1)=(I\otimes A) \tilde{x}(k)-\rho(I\otimes BK_\eps)\tilde{x}^\kappa(k)+\rho(I\otimes BK_\eps)e^\kappa(k)\\
	\bar{e}(k+1)=I\otimes (A-FC)\bar{e}(k)\\
	e(k+1)=(\bar{D}\otimes A)e(k)+\bar{e}(k)
	\end{system*}
	\end{equation}
	
	Similar to Theorem \ref{mainthm1}, we prove the stability of \eqref{newsystem3} without delays first,
	\begin{equation}\label{newsystem33}
	\begin{system*}{l}
	\tilde{x}(k+1)=(I\otimes A) \tilde{x}(k)-\rho(I\otimes BK_\eps)\tilde{x}+ \rho(I\otimes BK_\eps)e(k)\\
	\bar{e}(k+1)=I\otimes (A-FC)\bar{e}(k)\\
	e(k+1)=(\bar{D}\otimes I)e(k)+\bar{e}(k)
	\end{system*}
	\end{equation}
	
	Since we have $A-FC$ and $\bar{D}\otimes A$ are Schur stable, one can obtain
	\[
	\lim_{k\to \infty}\bar{e}(k)\to 0 \text{ and }\lim_{k\to \infty}e(k)\to 0
	\]
	i.e. we just need to prove the stability of 
	\[		
	\tilde{x}_i(k+1)=(A-\rho BK_\eps)\tilde{x}_i(k).
	\]		
	Similarly to Theorem \ref{mainthm1}, we can obtain the result about the stability of the above system directly by using Lemma \ref{hode-lemma-ddelay} and choosing
	\[
	\rho> \frac{1}{2}.
	\]
	
	Next, similar to the proof of Theorem \ref{mainthm1}, we just need to prove \eqref{bound-condition} for $\omega\in [-\pi, \pi]$ and $\kappa_i\in\overline{[0,\bar{\kappa}]}$ based on Lemma \ref{hode-lemma-ddelay}. 
	
	Then, we can obtain the synchronization result for a small enough $\eps$ and a fixed $\rho$ satisfying
	\[
	\rho>\frac{1}{2\cos(\bar{\kappa}\omega_{\max} )}.
	\] 
\end{proof}
\begin{figure}[t]
	\includegraphics[width=9cm, height=7.2cm]{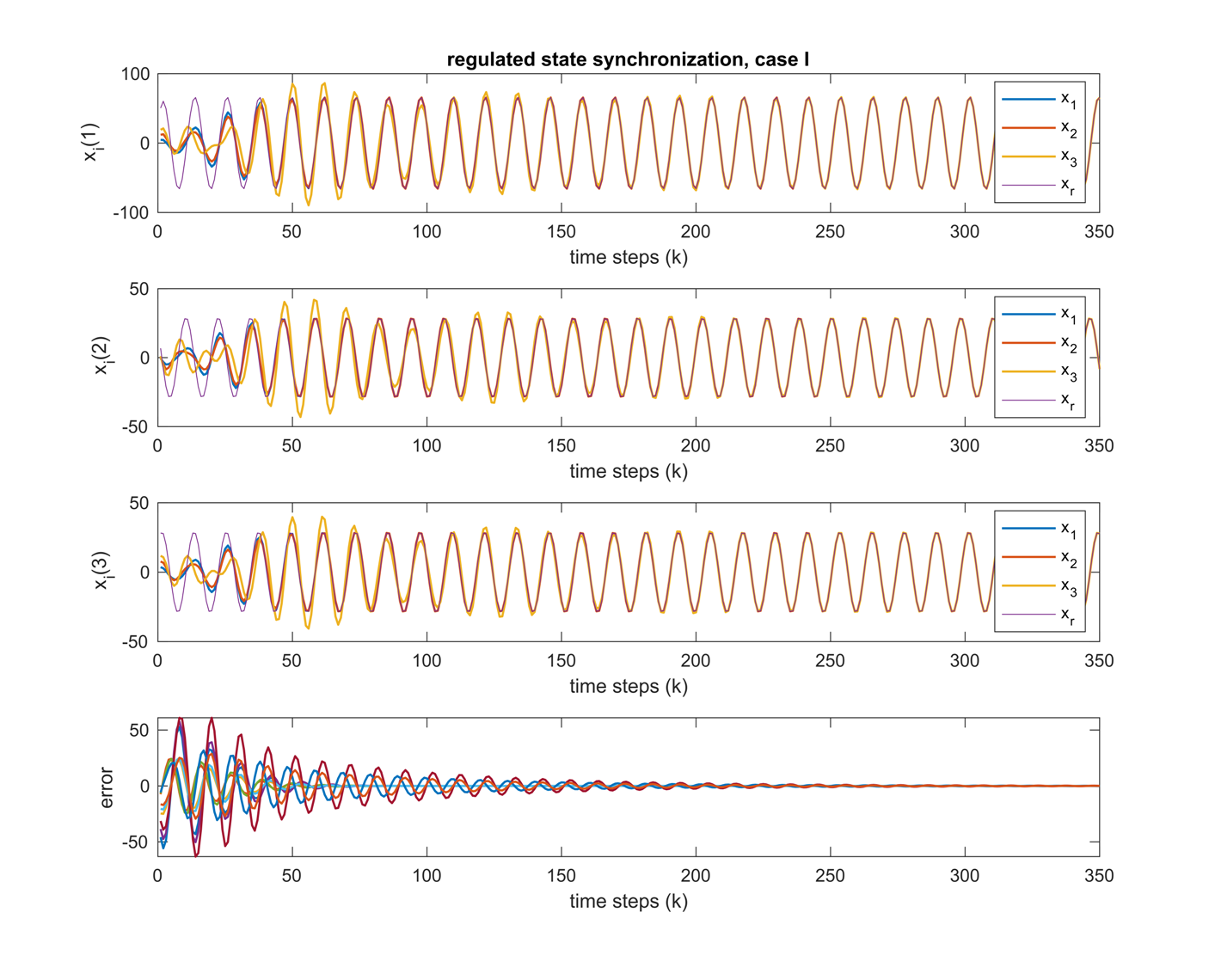}	
	\centering
	\vspace{-1cm}
	\caption{Directed communication network with $3$ nodes and full-state coupling.}\label{graph_3nodesfull}
\end{figure}\begin{figure}[t!]
	\includegraphics[width=9cm, height=7.2cm]{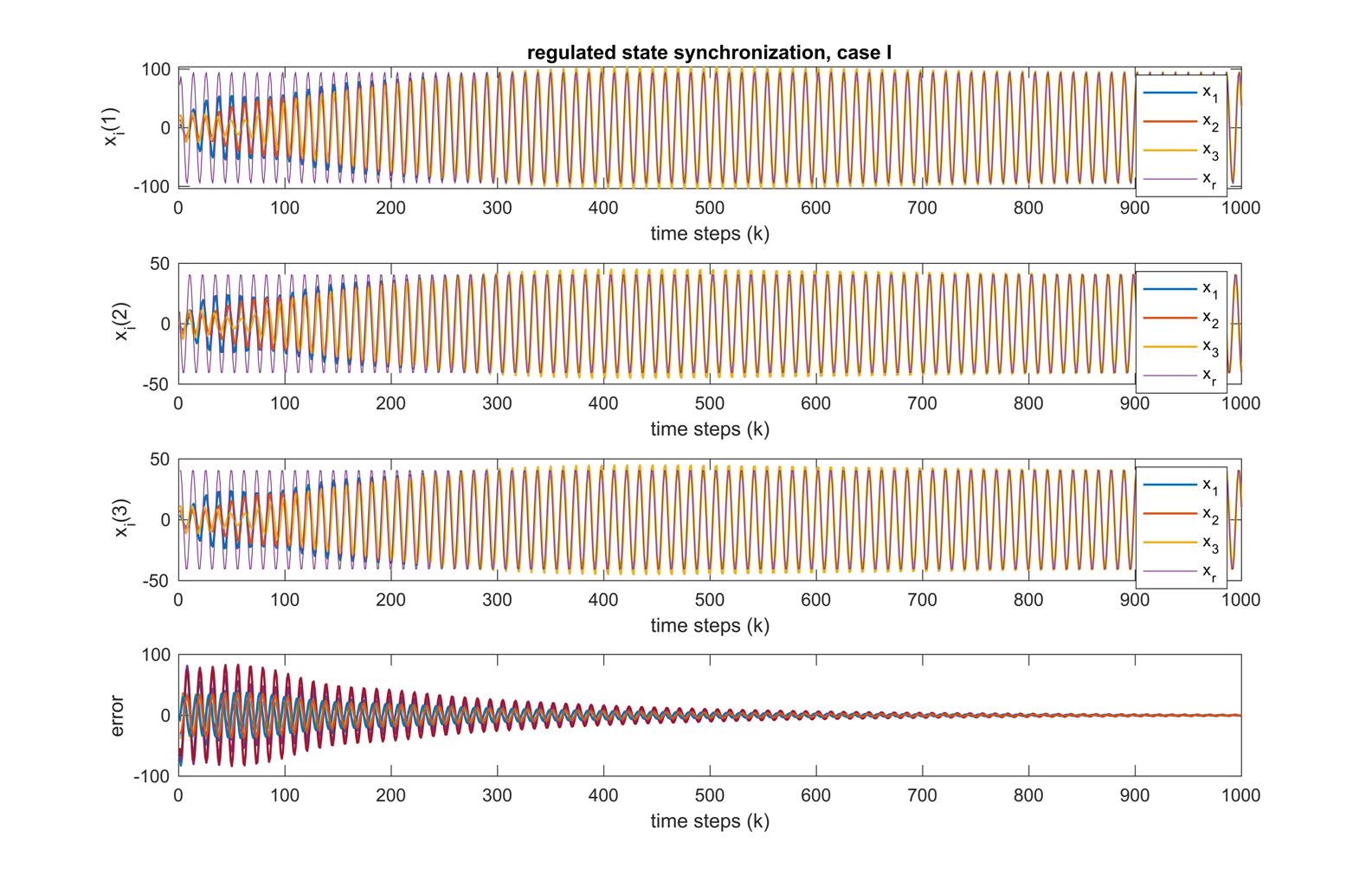}
	\centering
	\vspace{-1cm}
	\caption{Directed communication network with $3$ nodes and partial-state coupling.}\label{graph_3nodespartial}
\end{figure}\begin{figure}[t!]
	\includegraphics[width=9cm, height=7.2cm]{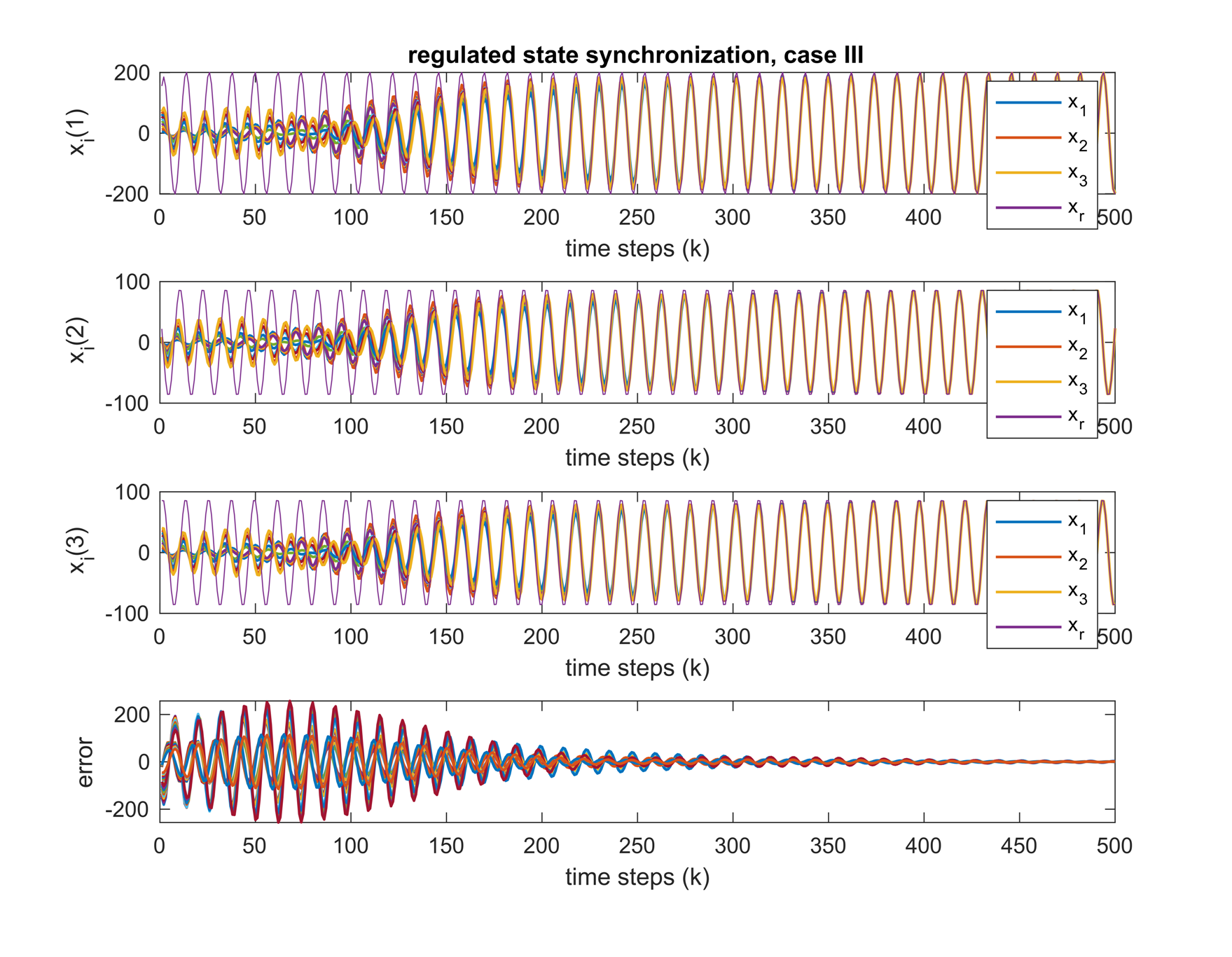}
	\centering
	\vspace{-1cm}
	\caption{Directed communication network with $10$ nodes and full-state coupling.}\label{10nodes_full}
\end{figure}\begin{figure}[t!]
	\includegraphics[width=9cm, height=7.2cm]{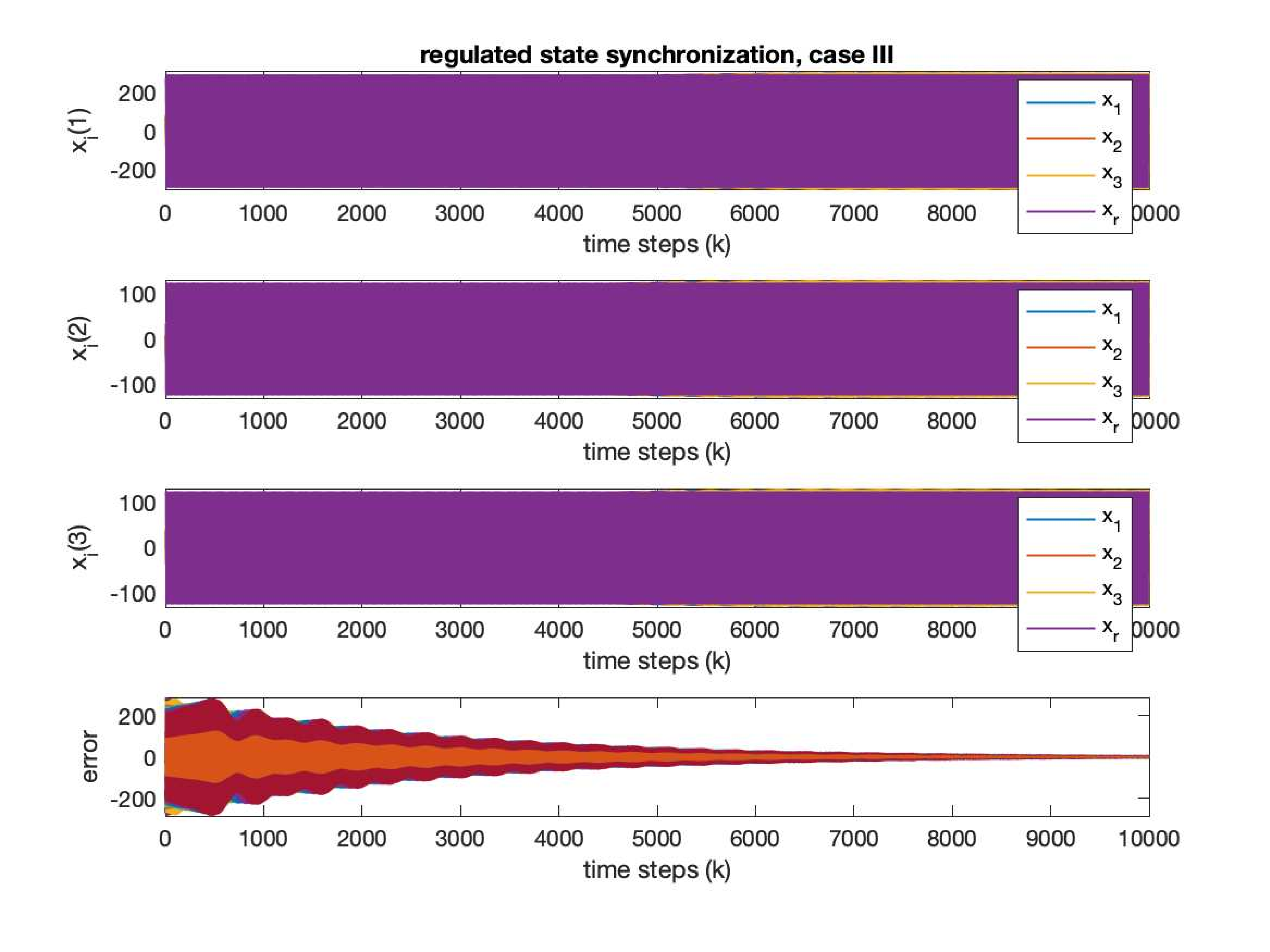}
	\centering
	\vspace{-1cm}
	\caption{Directed communication network with $10$ nodes and partial-state coupling.}\label{10nodes_partial}
\end{figure}
\section{MATLAB Implementation and Example}
\subsection{Implementation}
We present and discuss our MATLAB source code that implements the general methods described in this paper. Namely, we include three files that allow the user to make use of our protocol in a number of different contexts. The code (and hence, this section) is very similar to our continuous-time version of this paper \cite{inputDelay-Con-ArXive}. We will also highlight a few key differences.

The file \verb+discrete_protocol_design.m+ designs the central product of this paper, the protocol, setting it up for use. It accepts \emph{only} the agent model ($A,B,C$) and an upper bound on the delays ($\bar{\kappa}$). Recall that in the full-state coupling case, the protocol is given by \eqref{pscp1} and \eqref{arespecial} and in the partial-state coupling case by \eqref{pscp3} and \eqref{arespecial}.
In view of these equations, the function \verb+discrete_protocol_design+ returns the relevant data necessary to define them. Namely, $\epsilon^*$, $\rho,$ $\bar{\kappa}_{\max}$, $K$, and $F$. Note that the selection of $F$ is arbitrary, and we welcome the user to change our code to pick any value as far as $A - FC$ is Schur stable. 

We describe briefly how this function operates. The proof of Theorem 1, particularly the definitions of $\epsilon^*$ and $\rho$, reveal that there is a large degree of freedom in choosing the pair $(\epsilon^*, \rho).$ Furthermore, different choices can lead to drastically different speeds of convergence (i.e., how quickly $x_i$ converges to $x_r$ for $i = 1, \dots, N$). For fixed $\rho$, among valid choices of $\epsilon^*$, faster convergence is typically obtained by larger $\epsilon^*.$ In this vein, our algorithm seeks to obtain a less conservative $\epsilon^*$, given a fixed $\rho$. This is done by first choosing $\theta$ based on $\rho$, from which we make a non-conservative estimate of $\mu$, finally choosing $\epsilon^*$ based on its definition (which involves $\mu$). Moreover, we comment that this file in no way chooses the best $(\epsilon^*, \rho)$ pair that optimizes convergence. We simply guarantee that our parameters satisfy the solvability conditions laid out in this paper. The existence of an algorithm that chooses optimal $(\epsilon^*, \rho)$ in all generality (both in discrete-time and continuous-time) remains an open question, and will be the subject of future research.

The second and final main file we include is the \verb+ discrete_input_delay_solver.m+  file, which is a complete simulation package. This file defines a function of the same name that accepts an agent model, the delays, the adjacency matrix of the communication network, the set of leader nodes, and the initial conditions (in addition to the period of integration $K_{\max}$, which specifies the time interval $\overline{[0,K_{\max}]}$ that the user wants the solution over). From here, the function uses \verb+protocol_design+ to choose acceptable protocol parameters to achieve regulated state synchronization as stated in \eqref{synchro} through the use of protocols \eqref{pscp1} and \eqref{arespecial} in the case of full-state coupling and \eqref{pscp3} and \eqref{arespecial} for partial-state coupling. If matrix $C$ passed to the function is the identity, the protocol for full-state coupling will be enacted, otherwise, partial state coupling protocol will be utilized.  The underlying algorithm is one of the few things about our source code that is truly different from the continuous-time version, and is much simpler. As opposed to having to choose a mesh (i.e., a very dense time series) to get the solution over, we need only get the solution at time steps $k = 1, \dots, K_{\max}$. Moreover, all that is required to accomplish this is a simple update loop, where for each $k = 0, \dots, K_{\max} - 1,$ $x(k+1)$ is computed from $x(k)$ based on the agent dynamics \eqref{eq1}, as well as the appropriate protocol. In our implementation, we write the two together as a closed-loop system. In the end, function returns the state $x$ and the exosystem $x_r$ as matrices, with column $k$ representing $x(k)$ and $x_r(k),$ respectively. This gives data which can be easily plotted or used for a variety of purposes.

Finally, we include a third file \verb+plotting.m+, which allows the user to visualize the simulation results (without worrying about formatting, or other logistical issues). We hope the inclusion of these files will allow the reader to illustrate our results for themselves if so desired, and more importantly, that those who have use for a product of this form will profit from them in their own endeavors.
%\end{mdframed}
\subsection{Numerical example}
Consider an agent model \eqref{eq1} with
\[ A=
\begin{pmatrix}
1/2 & 1 & 1 \\
0 & \sqrt{3}/2 & -1/2 \\
0 & 1/2 & \sqrt{3}/2
\end{pmatrix} , B= \begin{pmatrix}
1 \\
1 \\
0
\end{pmatrix},  C=\begin{pmatrix}
1 & 0 & 0
\end{pmatrix}
\] and in the case of full-state coupling $C=I$.
%and an agent model with partial-state coupling given by
%\[
%\begin{cases}
%x_i(k+1)  = 
%\begin{pmatrix}
%	1/2 & 1 & 1 \\
%	0 & \sqrt{3}/2 & -1/2 \\
%	0 & 1/2 & \sqrt{3}/2
%\end{pmatrix} x_i(k) + \begin{pmatrix}
%	1 \\
%	1 \\
%	0
%\end{pmatrix} u_i(k - \kappa_i) \\
%y_i(k) = \begin{pmatrix}
% 1 & 0 & 0
%\end{pmatrix}
%x_i(k)
%\end{cases} 
%\]
%for each agent $i \in \{1, \dots, N\}$.
In the following three cases, we simulate the regulated state synchronization via protocols \eqref{pscp1} and \eqref{pscp3}, for full- and partial-state couplings respectively. Moreover, we present plots of the states, exosystem, and relative difference (error) in each case. We note that from \eqref{boundtau}, the given agent model can only admit $\bar{\kappa} < 3$. For all of our partial state examples, we have $ F\T = (2.1321, 0.5469, 1.0299)$.
\begin{figure}[t]
	\includegraphics[width=9cm, height=7.2cm]{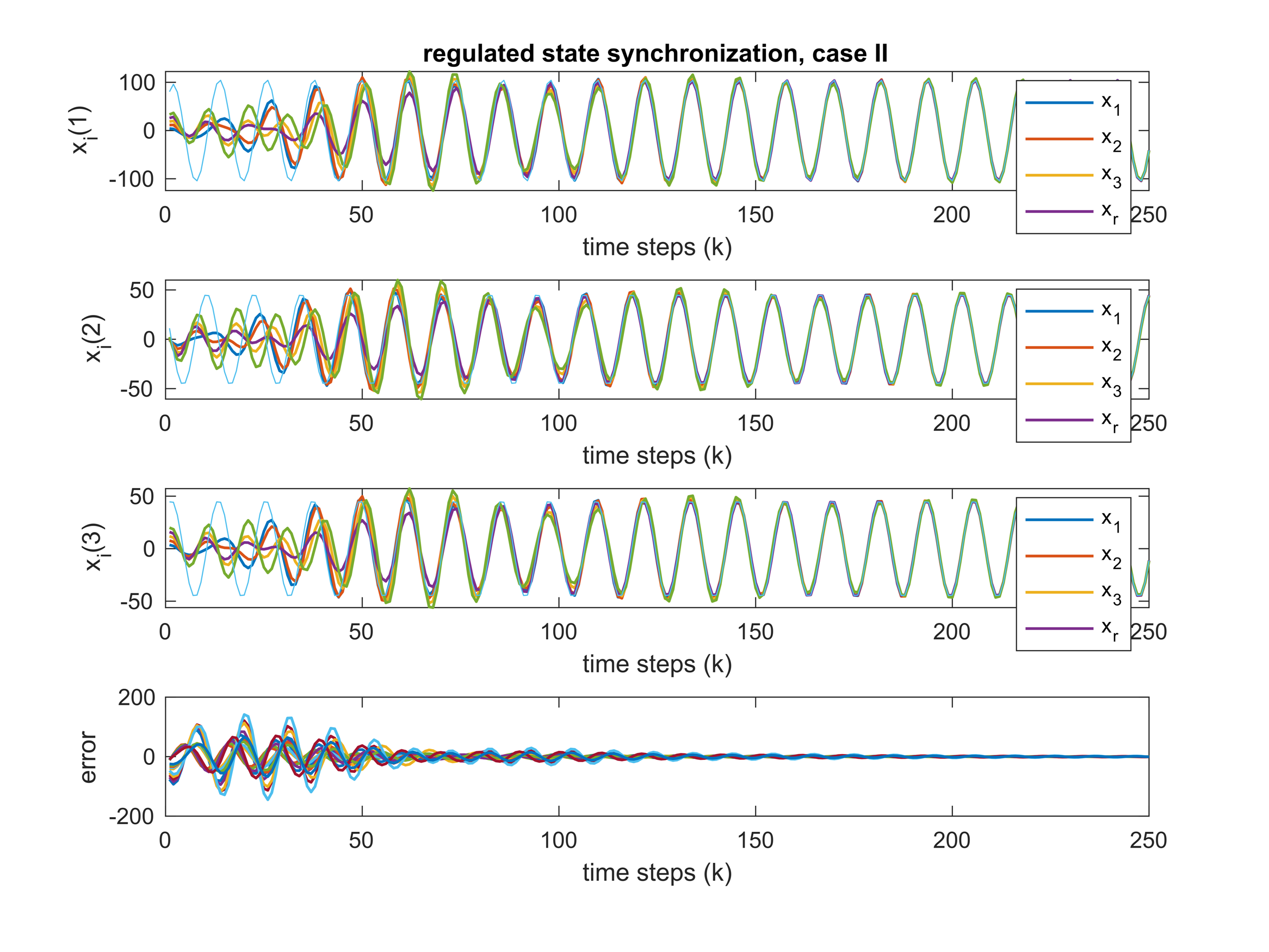}
	\centering
	\vspace{-1cm}
	\caption{Directed communication network with $5$ nodes and full-state coupling.}\label{5nodes_full}
\end{figure}
\begin{figure}[t]
	\includegraphics[width=9cm, height=7.2cm]{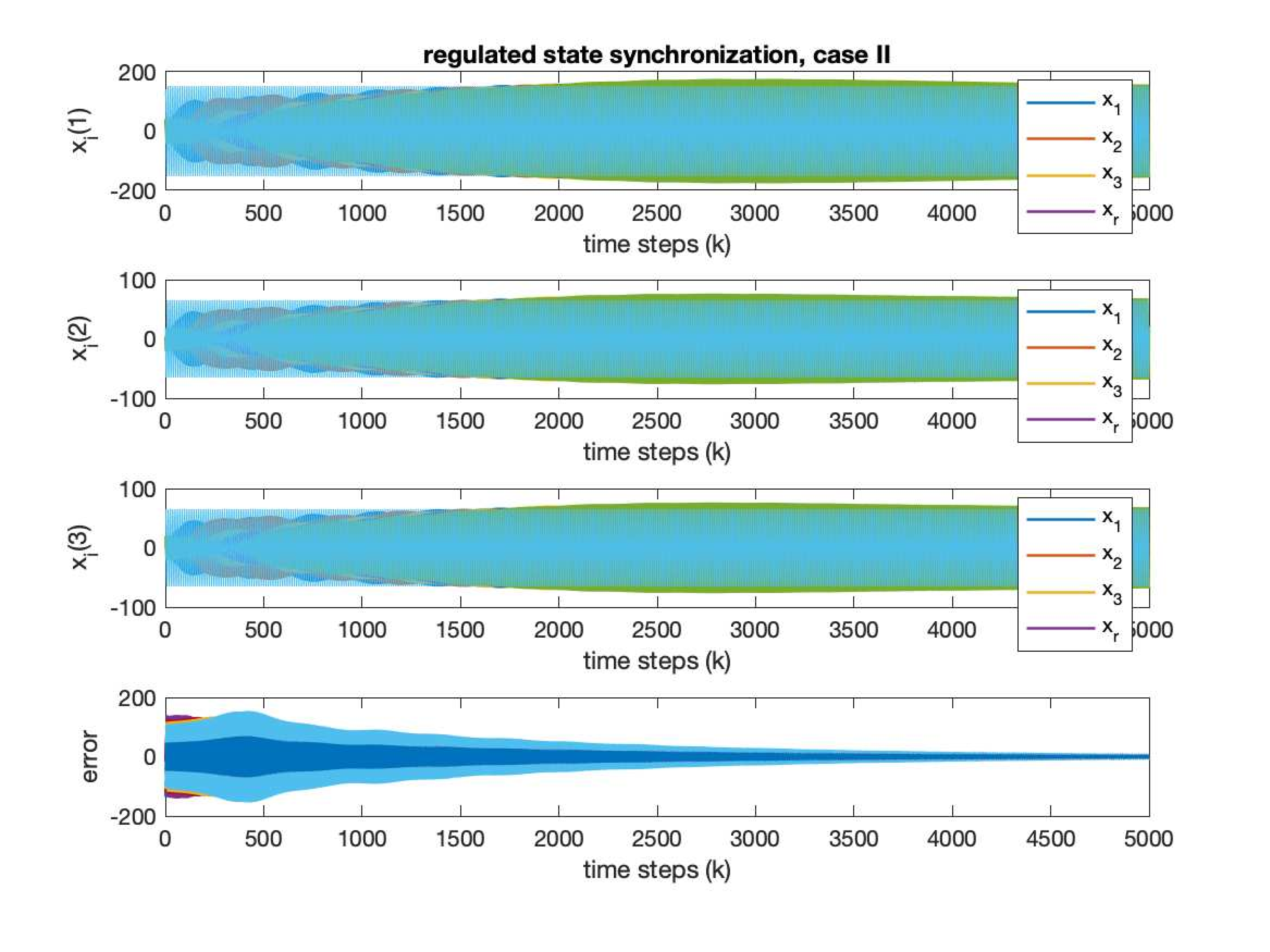}
	\centering
	\vspace{-1cm}
	\caption{Directed communication network with $5$ nodes and partial-state coupling.}\label{5nodes_partial}
\end{figure}
\begin{enumerate}
	\item  Firstly, we consider a MAS with $N = 3$ (i.e. 3 agents), and communication network defined by the adjacency matrix with entries $a_{21} = a_{32} = a_{13} = 1$ and all the rest zero. 
	The delays in both networks with full and partial-state coupling are given as follows: $\kappa_1 =\kappa_2 = 1 $, and $\kappa_3 = 2.$ Furthermore, we choose $\epsilon = 0.001.$ We present the results of our MAS synchronization for the full-state and partial-state coupling cases in Figure \ref{graph_3nodesfull} and Figure \ref{graph_3nodespartial}, respectively.
	\item Next, we consider a MAS with $N = 5$ agents and adjacency matrix $\mathcal{A}$ with entries $a_{i+1, i} = 1$ for $i = 1, \dots, 4,$ as well as $a_{13} = a_{35} = 1$, and all the rest zero. 
	The delays are chosen as $\kappa_1=\kappa_2 =\kappa_3= \kappa_5 = 2 $, and $\kappa_4 = 1$ for both networks with full- and partial-state coupling. Furthermore, we choose $\eps = 0.001$ for the full-state case, and $\eps = 0.00001$ for the partial-state case. We present the MAS synchronization results for both of these cases in Figure \ref{5nodes_full} and Figure \ref{5nodes_partial}, respectively.	
	\item In this case, we consider a MAS with $N=10$ and adjacency matrix $\mathcal{A}$ with entries $a_{i+1, i} = 1$ for $i = 1, \dots, 9,$ as well as $a_{15} = a_{1,10} = a_{5,10} = 1$, and all the rest zero. For the full-state coupling case, the delays are chosen as $\kappa_1 = \kappa_3 = \kappa_4 = \kappa_5 = \kappa_8 = 1 $ and $\kappa_2 = \kappa_6 = \kappa_7 = \kappa_9 = \kappa_{10} = 1$. For the partial state coupling case, we let $\kappa_i = 1$ for $i = 1, \dots, 10.$ Moreover, $\eps = 0.001$ for both cases. We present the results of our MAS synchronization for the full-state and partial-state coupling cases in Figure \ref{10nodes_full} and Figure \ref{10nodes_partial}, respectively.
\end{enumerate}
We now make a few observations based on these examples. The simulation results show that our one-shot protocol designs do not need any knowledge of the communication network and achieve regulated state synchronization for any network with any number of agents. Moreover, we observe that upper bounds on the input delay tolerance only depends on the agent dynamics. 
Also, we observe that in each case, full-state synchronization is achieved faster than partial-state. This is to be expected, as full-state coupling communicates \emph{all} states over the network (as opposed to only part of the states), meaning that the agents are given more information about other agents, allowing them to synchronize faster. Note that in all examples, the exosystem produces an oscillating command signal, which all of the agents eventually synchronize with.
\bibliographystyle{plain}
\bibliography{referenc}
\end{document}